\newtheorem{thm}{Theorem}[section]
\newtheorem{defn}[thm]{Definition}
\newtheorem{coro}[thm]{Corollary}
\newtheorem{prop}[thm]{Proposition}
\newcommand{\trn}[1]{\text{Tr}_{#1}}
\newcommand{\ket}[1]{| #1 \rangle}
\newcommand{\bra}[1]{\langle #1 |}
\newcommand{\inprod}[2]{\bra{#1}#2\rangle}
\newcommand{\ketbra}[1]{\ket{#1}\bra{#1}} 
\newcommand{\sqket}[1]{|\!#1 \rangle}
\newcommand{\exval}[2]{\langle #1 \rangle_{#2}}
\newcommand{\identity}{\mathbb{I}}
\newcommand{\hilbert}{\mathcal{H}}
\newcommand{\map}{\mathcal{M}}
\begin{document}


\title[]{On state vs. channel quantum extension problems: 
exact results for {\em U}$\,\otimes${\em U}$\,\otimes${\em U} symmetry}

\author{Peter D. Johnson}
\address{Department of Physics and Astronomy, Dartmouth 
College, 6127 Wilder Laboratory, Hanover, NH 03755, USA}
\ead{Peter.D.Johnson.GR@Dartmouth.edu}

\author{Lorenza Viola}
\address{Department of Physics and Astronomy, Dartmouth 
College, 6127 Wilder Laboratory, Hanover, NH 03755, USA}
\ead{Lorenza.Viola@Dartmouth.edu}

\begin{abstract}
We develop a framework which unifies seemingly different extension (or ``joinability'') problems for 
bipartite quantum states and channels. This includes well known extension problems such as optimal quantum cloning 
and quantum marginal problems as special instances. Central to our generalization is a variant of the Choi-Jamiolkowski isomorphism between bipartite states and dynamical maps which we term the ``homocorrelation map": while the former emphasizes the preservation of the positivity constraint, the latter is designed to preserve {\em statistical correlations}, allowing direct contact with entanglement. 
In particular, we define and  analyze state-joining, channel-joining, and local-positive joining problems in three-party settings exhibiting collective $U\otimes U\otimes U$ symmetry, obtaining exact analytical characterizations in low dimension. Suggestively, we find that bipartite 
quantum states are limited in the degree to which their measurement  outcomes may agree, while 
quantum channels are limited in the degree to which their measurement outcomes may disagree. Loosely speaking, quantum mechanics enforces an upper bound on the extent of {\em positive} correlation across a bipartite system at a given time, as well as on the extent of {\em negative} correlation between the state of a same system across two 
instants of time.  We argue that these general statistical bounds inform the quantum joinability limitations, and show 
that they are in fact sufficient for the three-party $U\otimes U\otimes U$-invariant setting.

\end{abstract}

\pacs{03.67.Mn, 03.65.Ud, 03.65.Ta} 
\vspace{2pc}
\noindent{\it Keywords}: Quantum correlations and entanglement; quantum channel-state duality

\section{Introduction}
\label{sec:intro}

It has long been appreciated that many of the intuitive features of classical probability theory do not translate to quantum theory. For instance, every classical probability distribution has a unique decomposition into extremal distributions, whereas a general density operator does not admit a unique decomposition in terms of extremal operators (pure states). Entanglement is responsible for another distinctive trait of quantum theory:  as vividly expressed by Schr\"{o}dinger back in 1935 \cite{Schroedinger1935}, ``the best possible knowledge of a total system does not necessarily include total knowledge of all its parts,'' in striking contrast to the classical case. Certain features of classical probability theory do, nonetheless, carry over to the quantum domain. While it is natural to view these distinguishing features as a consequence of quantum theory being a non-commutative generalization of classical probability theory in an appropriate sense, thoroughly understanding {\em how} and {\em the extent to 
which} 
the purely quantum features of the theory arise from its mathematical structure remains a longstanding central question across quantum foundations, mathematical physics, and quantum information processing (QIP), see e.g. Refs. 
\cite{vonNeumann1955,Accardi1990,Leifer2006,Barnum2012}. 

In this paper, we investigate a QIP-motivated setting which allows us to directly compare and contrast features of
quantum theory with classical probability theory, namely, the relationship between the parts (subsystems) of a composite quantum system and the system as a whole.
Specifically, building on our earlier work \cite{Johnson2013}, we develop and investigate a general framework for what we refer to as \emph{quantum joinability}, which addresses the compatibility of different statistical correlations among quantum measurements on different systems. Arguably, 
the most familiar case of joinability is provided by the ``quantum marginal'' (aka ``local consistency'') problem \cite{Klyachko2006,Liu2006}. In this case, we ask whether there exists a joint quantum state compatible with a given set of reduced states on (typically non-disjoint) groupings of subsystems. The quintessential example of a failure of joinability is the fact that two pairs of two-level systems (qubits), say, Alice-Bob ($A$-$B$) and Alice-Charlie ($A$-$C$), cannot simultaneously be described by the singlet state, $\ket{\psi^-}=\sqrt{1/2}(\sqket{\uparrow \downarrow} - \sqket{\downarrow \uparrow})$. A seminal exploration of this observation was carried out by Coffman, Kundu, and Wootters \cite{Wootters2000} and later dubbed the ``monogamy of entanglement'' \cite{Terhal2004}. In classical probability theory, a necessary and sufficient condition for marginal probability distributions on $A$-$B$ and $A$-$C$ to admit a joint probability distribution (or ``extension") on $A$-$B$-$C$ is that the marginals 
over $A$ 
be equal \cite{Klyachko2006,Fritz2013}. The analogous compatibility condition remains necessary in quantum theory, but, as demonstrated by the above example, is clearly no longer sufficient. The identification of necessary and sufficient conditions in general settings with overlapping marginals remains an actively investigated open problem as yet \cite{Johnson2013,Lieb2013,Chen2013}.

Physically, standard {\em state-joinability problems} as formulated above for density operators, may be regarded as characterizing the compatibility of statistical correlations of two (or more) different subsystems at a {\em given} time. However, correlations between the same system {\em before and after} the action of a quantum channel -- a completely positive trace-preserving (CPTP) dynamical map -- may also be considered, for example, in order to characterize the ``location'' of quantum information that one subsystem may carry about another \cite{Griffiths2005} and/or the causal structure of the events on which probabilities are defined \cite{Leifer2006}. With this in mind, one may formulate an analog quantum marginal problem for quantum channels (see also Ref. \cite{Chen2012}). For example, given two quantum channels $\map_{AB}:\mathcal{B}(\hilbert_A)\rightarrow\mathcal{B}(\hilbert_B)$ and $\map_{AC}:\mathcal{B}(\hilbert_A)\rightarrow\mathcal{B}(\hilbert_C)$ (where we notate the space of bounded linear 
maps on a Hilbert space $\hilbert$ with
 $\mathcal{B}(\hilbert)$), one may ask whether there exists a quantum channel $\map_{
ABC}:\mathcal{B}(\hilbert_A)\rightarrow\mathcal{B}(\hilbert_B\otimes \hilbert_C)$, whose reduced channels are $\map_{AB}$ and $\map_{AC}$, respectively.

A motivation for considering such \emph{channel-joinability problems} is that questions regarding the optimality of paradigmatic QIP tasks such as quantum cloning \cite{Cerf2000,Iblisdir2005} or broadcasting \cite{Ghiu2003} may be naturally recast as such. A fundamental tool here is the {\em Choi-Jamiolkowski isomorphism} \cite{Jamiolkowski1972,Choi1975}, which may been used to translate optimal cloning problems into quantum marginal problems \cite{Ramanathan2009,Horodecki2012}, and vice-versa \cite{Johnson2013}. Both monogamy of entanglement and the no-cloning theorem \cite{Wootters1982} have significant implications for the behavior of quantum systems: the former effectively constrains the kinematics of a multipartite quantum system, while the latter constrains the dynamics of a quantum system (composite or not). As both of these fundamental concepts are closely related to respective quantum joinability problems, we are prompted to explore in more depth their similarities and differences. Identifying a 
general joinability framework, able to encompass {\em all} such quantum marginal problems, is one of our main aims here.

%
The content is organized as follows. In Section \ref{sec:unifyingframework}, we introduce and motivate the use of what we term the \emph{homocorrelation map} as our main tool for representing quantum channels as bipartite operators. 
We formally define a notion of quantum joinability that incorporates all joinability problems of interest, and 
discuss ways in which different joinability problems may be (homomorphically) mapped into one another.
In Section \ref{sec:tripartite}, we obtain a complete analytical characterization of some archetypal examples of low-dimensional quantum joinability problems. Namely, we address three-party joinability of quantum states, quantum channels, and block-positive (or ``local-positive'') operators, in the case that the relevant operators are invariant under the group of collective unitary transformations, that is, under the action of arbitrary transformations of the form $U \otimes U\otimes U$. These examples allow us to distinguish the joinability limitations stemming from classical probability theory from those due to quantum theory and, furthermore, to contrast the joinability properties of quantum channels vs. states. In Section \ref{sec:agreement}, we investigate a possible source for the stricter joinability bounds in quantum theory, as compared to classical probability theory. We introduce the notion of \emph{degree of agreement} (disagreement), that is, the probability 
that a random local collective 
measurement yields same (different) outcomes, as given by an appropriate two-value POVM. 
We find that {\em quantum theory places different bounds on the degree of agreement arising from quantum states than it does on that of quantum channels}: while quantum states are limited in their degree of agreement, quantum channels are limited in their degree of disagreement. The differences in these bounds point to a crucial distinction between quantum channels and states. At least in the examples of Section \ref{sec:tripartite} and a few others, these limitations 
suffice in fact to determine the bounds of joinability {\em exactly}. Possible implications of such bounds with regards to joinability properties of general quantum states and channels are also discussed, and final remarks conclude in Section \ref{sec:conclusion}.

\section{General quantum joinability framework}
\label{sec:unifyingframework}

We begin by reviewing the standard state-joinability (quantum marginal) problem, framing it in a language suitable for generalization. Given a composite Hilbert space $\hilbert^{(N)}=\bigotimes_{i=1}^N\hilbert_i$, a \emph{joinability scenario} is defined by a list of partial traces $\{\trn{\ell_k}\}$, with each $\ell_k\subseteq [1,\ldots,N]$, along with a set of allowed ``joining operators,'' $W$, which in this case is the set of positive trace-one operators acting on $\hilbert^{(N)}$; accordingly, we may associate a joinability scenario with a 2-tuple $(W,\{\trn{\ell_k}\})$. For a given joinability scenario, the images of 
$W$ under the $\trn{\ell_k}$ define 
a set of reduced states $\{R_k\}=\{\trn{k}(W)\}$. For any list of states 
$\{\rho_k\}\in\{R_k\}$, the following definition then applies: 
\begin{defn}
{\bf [State-Joinability]} Given a joinability scenario described by the pair 
$(W \equiv \{w|\,\,w\geq 0\},\{\trn{\ell_k}\})$, the reduced states 
$\{\rho_k\}\in\{R_k\}$ are \emph{joinable} if there exists a 
joining state $w\in W$ such that $\trn{\ell_k}(w) =\rho_k$ for all $k$.
\end{defn} 


The first step toward achieving the intended generalization of the above definition to quantum channels is to represent the latter as bipartite operators. In the following subsection, we establish a tool to achieve this and highlight its broader utility.

\subsection{Homocorrelation map and positive cones}
\label{subsec:homocorr}

%
One way to identify channels with bipartite operators is by use of the Choi-Jamiolkowski (CJ) isomorphism  \cite{Choi1975,Jiang2013}. 
This isomorphism, denoted $\mathcal{J}$, identifies each map $\map\in\mathcal{L}(\hilbert_A,\hilbert_B)$ with the state resulting from its (the map's) action on one member of a Bell state:
\begin{equation}
\mathcal{J}(\map)\equiv [\mathcal{I}_A\otimes\map] (\ketbra{\Phi^+} )=
 \frac{1}{d_A}\sum_{ij} \ket{i}\bra{j} \otimes {\mathcal M}( \ket{i}\bra{j}),
 \label{choi}
\end{equation}
where $\mathcal{I}_A$ is the identity map on $\mathcal{B}(\hilbert_A)$ and $d_A=\text{dim}(\hilbert_A)$. We note that $d_{A}\ketbra{\Phi^+}=V^{T_A}$, where $V$ is the swap operator on $\hilbert_A \otimes \hilbert_A$ and $T_A$ denotes partial transposition on subsystem $A$. 
The transformation is an isomorphism in that it preserves the positivity of the objects it maps to and from;
namely, quantum channels (CPTP maps) are mapped to quantum states (positive trace-one operators). Consequently, the CJ isomorphism is a useful diagnostic tool for determining whether or not a map is CP.
It does depend on a choice of local basis (to define $\ket{\Phi^+}$ and $T_A$).
For the isomorphism to hold, the reference state ($\ketbra{\Phi^+}$ above) must be maximally entangled; and, for $d>2$, any such state reflects a choice of local bases.

We employ an alternative, means of identifying quantum channels with bipartite operators. In this approach, basis-dependence is avoided by replacing the reference state with the normalized swap operator $V/d$. Since the swap operator is not a density operator, this correspondence lacks an interpretation as a physical process. But, for our purposes, the lack of physical interpretation comes at a greater benefit. The resulting bipartite operator bears the statistical properties of the corresponding channel.

The identification was introduced for the special case of qubits in \cite{Vedral2013}. We make this idea more precise and general by defining the \emph{homocorrelation map}, $\mathcal{H}$, which takes any map $\map\in\mathcal{L}(\hilbert_A,\hilbert_B)$ (with $\mathcal{L}(\hilbert_A,\hilbert_B)$ being the set of linear maps, or ``superoperators'', from $\mathcal{B}(\hilbert_A)$ to $ \mathcal{B}(\hilbert_B)$), to a ``channel operator'' $M_{H}\in\mathcal{B}(\hilbert_A\otimes\hilbert_B)$ according to
\begin{eqnarray}
\mathcal{H}(\map) \equiv [\mathcal{I}_A\otimes\map] (V/d_A) = \frac{1}{d_A}\sum_{ij} \ket{i}\bra{j} \otimes {\mathcal M}(\ket{j}\bra{i}), 
\label{HomoCor}
\end{eqnarray}
where, again, $V= \sum_{i,j}\ket{ij}\bra{ji}$ with respect to any orthonormal basis $\{\ket{i}\}$.
While the CJ isomorphism is a handy diagnostic tool,
the homocorrelation map serves a different purpose. It \emph{does not} take CP maps to positive operators. Instead, it takes each map to an operator which exhibits the same statistical correlations as that map. This is made precise in the following:
\begin{figure}[t]
\centering
\hspace*{1cm}
\subfigure[]{\includegraphics[width=.35\columnwidth,viewport=50 420 530 750,clip]{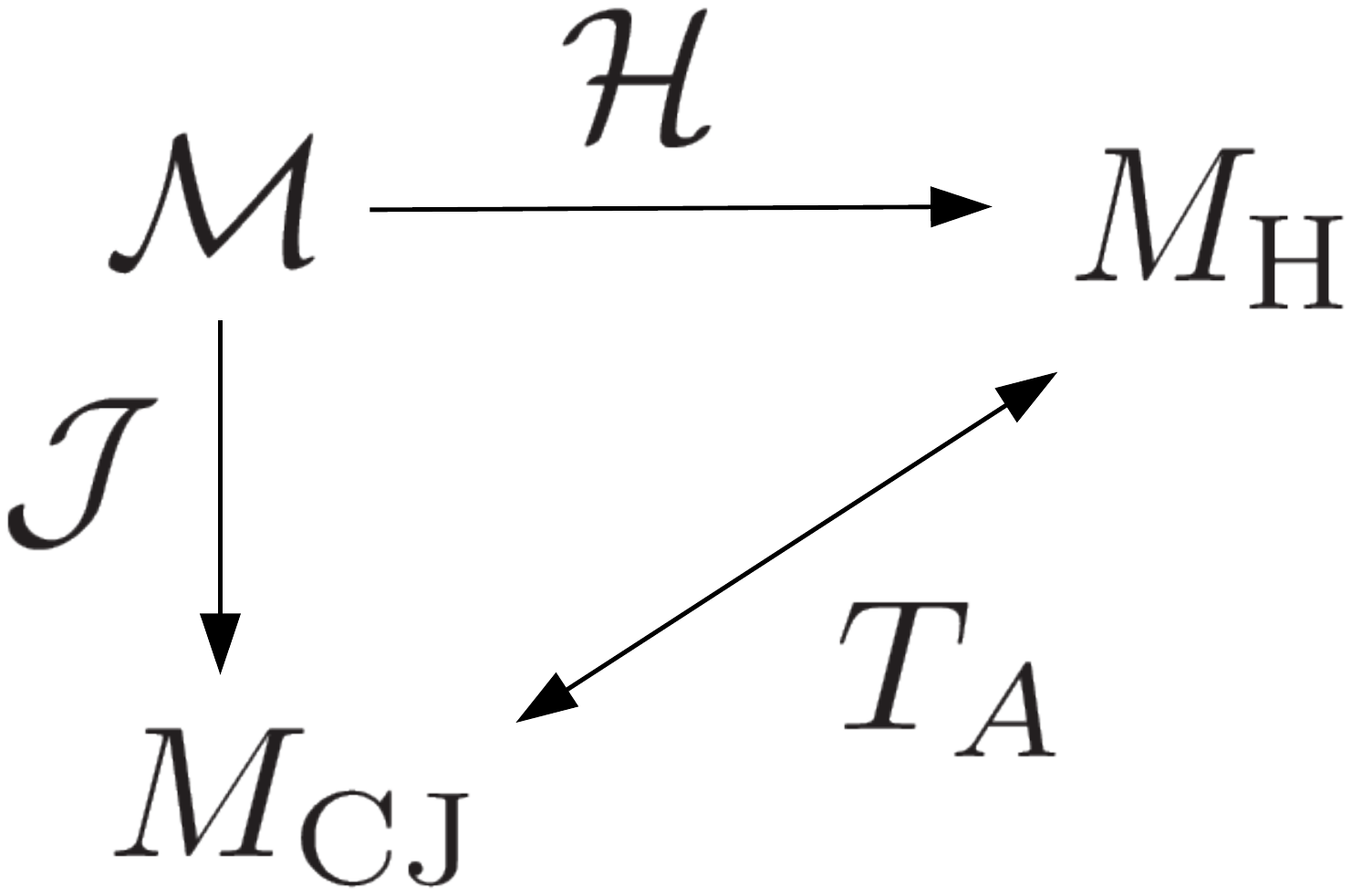}}\hspace*{0.5cm}
\subfigure[]{\includegraphics[width=.35\columnwidth,viewport=50 420 530 750,clip]{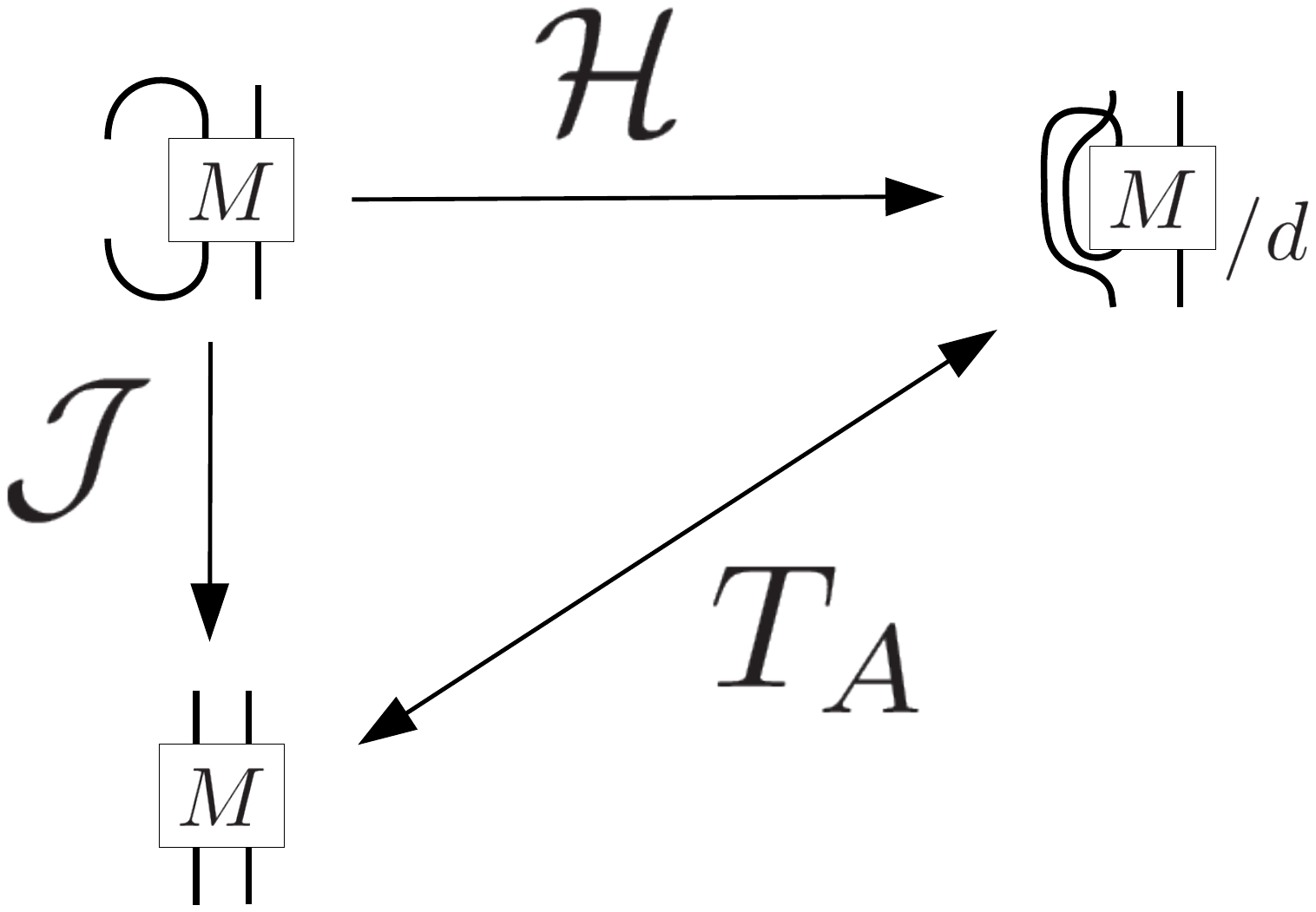}}
\vspace*{-2mm}
\caption{(a) Commutivity diagram summarizing the relationship between the Choi-Jamiolkowski isomorphism and the homocorrelation map defined in Eqs. (\ref{HomoCor})-(\ref{choi}). In (b), the corresponding actions are given in terms of tensor network diagram notation \cite{Biamonte2011}.  Proposition \ref{thm:homocorrprop} may be straightforwardly 
proved using this notation.
}
\label{fig:choihomo}
\end{figure}

\begin{prop}
A bipartite state $\rho\in\mathcal{B}(\hilbert_A\otimes\hilbert_B)$ and a quantum channel $\map:\mathcal{B}(\hilbert_A)\rightarrow\mathcal{B}(\hilbert_B)$ exhibit the same correlations, that is,
\begin{equation}
\label{eq:homocorrprop}
\trn{}[\rho A\otimes B]=\frac{1}{d_A}\trn{}[\map(A)B],\quad \forall A \in \mathcal{B}(\hilbert_A), 
B \in \mathcal{B}(\hilbert_B ). 
\end{equation}
if and only if the equality $\mathcal{H}(\map) = \rho$ holds.
\label{thm:homocorrprop}
\end{prop}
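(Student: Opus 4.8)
The plan is to establish the equivalence by exhibiting $\mathcal{H}(\map)$ as the \emph{unique} operator that reproduces the stated correlations. Concretely, I would first verify by direct computation that $\rho=\mathcal{H}(\map)$ always satisfies Eq.~(\ref{eq:homocorrprop}), which settles the ``if'' direction; I would then argue that the full family of conditions in Eq.~(\ref{eq:homocorrprop}), as $A$ and $B$ range over all operators, determines $\rho$ uniquely, which yields the ``only if'' direction.

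For the forward computation, I would substitute the explicit form $\mathcal{H}(\map)=\frac{1}{d_A}\sum_{ij}\ket{i}\bra{j}\otimes\map(\ket{j}\bra{i})$ into $\text{Tr}[\mathcal{H}(\map)\,A\otimes B]$. Since the trace of a tensor product factorizes, this becomes $\frac{1}{d_A}\sum_{ij}\text{Tr}[\ket{i}\bra{j}A]\,\text{Tr}[\map(\ket{j}\bra{i})B]=\frac{1}{d_A}\sum_{ij}\bra{j}A\ket{i}\,\text{Tr}[\map(\ket{j}\bra{i})B]$. Pulling the scalars $\bra{j}A\ket{i}$ through $\map$ by linearity, the whole argument reduces to the operator identity $\sum_{ij}\bra{j}A\ket{i}\,\ket{j}\bra{i}=A$, which is precisely the statement that the swap operator $V$ reshuffles the matrix elements of $A$ back into $A$. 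Recognizing this collapses the double sum and produces $\frac{1}{d_A}\text{Tr}[\map(A)B]$, as required. This single swap-induced contraction is the crux of the proposition, and it is exactly what the tensor-network rendering of Fig.~\ref{fig:choihomo}(b) makes transparent: crossing the two legs of $V$ simply feeds the $A$-leg directly into the input of $\map$.

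For uniqueness I would set $X\equiv\rho-\mathcal{H}(\map)$. Since $\rho$ satisfies Eq.~(\ref{eq:homocorrprop}) by hypothesis and $\mathcal{H}(\map)$ does by the computation just sketched, subtracting gives $\text{Tr}[X\,A\otimes B]=0$ for all $A\in\mathcal{B}(\hilbert_A)$ and $B\in\mathcal{B}(\hilbert_B)$. Because product operators $A\otimes B$ span $\mathcal{B}(\hilbert_A\otimes\hilbert_B)$ and the Hilbert–Schmidt form $(X,Y)\mapsto\text{Tr}[XY]$ is non-degenerate (taking $Y=X^\dagger$ gives $\text{Tr}[XX^\dagger]=\|X\|_2^2$, which vanishes only for $X=0$), it follows that $X=0$, i.e. $\rho=\mathcal{H}(\map)$.

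I expect no serious obstacle: the result is at bottom a bookkeeping identity about the swap operator, and the only point demanding genuine care is keeping the index gymnastics in $\sum_{ij}\bra{j}A\ket{i}\,\ket{j}\bra{i}=A$ straight. In particular one must confirm that it is the \emph{swap} $V$, and not its partial transpose $V^{T_A}=d_A\ketbra{\Phi^+}$, that delivers $\map(A)$ with the correct index pairing; using $V^{T_A}$ instead reproduces the Choi–Jamiolkowski correspondence (preserving positivity rather than correlations), which is the very distinction the homocorrelation map is designed to expose. It is worth noting that the calculation never invokes complete positivity or trace preservation, so the correspondence in fact holds verbatim for an arbitrary linear $\map$; the restriction to channels is only a matter of the intended application.
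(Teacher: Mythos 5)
Your proposal is correct and follows essentially the same route as the paper: the paper likewise settles the whole equivalence by the single direct computation $\mathrm{Tr}[\mathcal{H}(\map)\,A\otimes B]=\frac{1}{d_A}\mathrm{Tr}[\map(A)B]$, after noting (as you establish explicitly via the Hilbert--Schmidt pairing and the spanning of product operators) that two operators agree if and only if their expectations against all $A\otimes B$ coincide. Your added remarks --- that the identity hinges on $V$ rather than $V^{T_A}$, and that neither complete positivity nor trace preservation is ever invoked --- are accurate but supplementary; the core argument is identical.
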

{\em Proof.} The two operators $\rho$ and $\mathcal{H}(\map)$ are equal if and only if their expectations 
$\tr[\rho A\otimes B]=\tr[\mathcal{H}(\map)A\otimes B]$ for all $A,B$. Thus, it suffices to show that 
$\tr[\mathcal{H}(\map)A\otimes B]=\frac{1}{d_A}\tr[\map(A)B]$ for all $A,B$.
This equality may be established as follows:
\begin{eqnarray*}
\tr[\mathcal{H}(\map)A\otimes B]&=\frac{1}{d_A}\sum_{i,j}\tr[\ket{i}\bra{j}\otimes\map(\ket{j}\bra{i})A\otimes B]\nonumber\\
&=\frac{1}{d_A}\sum_{i,j}\tr[(\ket{i}\bra{j}A)\otimes(\map(\ket{j}\bra{i})B)]\nonumber\\
&=\frac{1}{d_A}\sum_{i,j}\bra{j}A\ket{i}\tr[\map(\ket{j}\bra{i})B]=\frac{1}{d_A}\tr[\map(A)B]. \hspace{22mm} \Box
\end{eqnarray*}

Equation (\ref{eq:homocorrprop}) may be taken as the {\em defining} property of the homocorrelation map. 
An example demonstrates the utility of this representation. Consider the one-parameter family of qudit depolarizing channels \cite{Nielsen2001}, defined as 
\begin{equation}
\mathcal{D}_{\eta}(\rho)=(1-\eta)\tr(\rho)\frac{\identity}{d}+\eta\rho.                                                                                                                                                           
\end{equation}
The action of this channel 
commutes with all unitary channels
in that $\mathcal{D}_{\eta}(U\rho U^{\dagger})=U\mathcal{D}_{\eta}(\rho) U^{\dagger}$. 
Under the homocorrelation map, the depolarizing channels are taken to operators with $U\otimes U$ symmetry, namely, 
\begin{equation}
 \mathcal{H}(\mathcal{D}_{\eta})=(1-\eta)\frac{\identity\otimes\identity}{d^2}+\eta \frac{V}{d},
\end{equation}
where $V$ is, again, the swap operator. Trace-one, positive operators of this form are the well-known {\em Werner states} \cite{Werner1989} (see also Sec. \ref{Werner}). Imagine that an observer does not know {\em a priori} whether her two measurements are made on distinct systems in a Werner state or if they are made on the same system before and after a depolarizing channel has been applied. If presented with a Werner state or depolarizing channel having $\eta=-\frac{1}{d^2-1}$ to $\frac{1}{d+1}$, the observer will \emph{not} be able to distinguish between the two cases. 
The homocorrelation map makes this operational identification explicit.
To contrast, the CJ map takes the depolarizing channels to so-called {\em isotropic states} \cite{Horodecki1999},
\begin{equation}
 \mathcal{J}(\mathcal{D}_{\eta})=(1-\eta)\frac{\identity\otimes\identity}{d^2}+\eta\ketbra{\Phi^+},
\end{equation}
where $\ket{\Phi^+}=\sum_i \ket{i i}/\sqrt{d}$. The isotropic states are defined by their symmetry with respect to $U\otimes U^T$ transformations. An observer in the scenario above would certainly be able to distinguish between the correlations of the depolarizing channel and the isotropic states, as long as $\eta\neq0$. 

The distinction between the CJ isomorphism and the homocorrelation map can be further appreciated by contrasting the sets of operators they produce.
The set of CP maps forms a cone in the set of superoperators $\mathcal{L}(\hilbert_A,\hilbert_B)$. Both the CJ isomorphism and the homocorrelation map are cone-preserving maps (by linearity) from $\mathcal{L}(\hilbert_A,\hilbert_B)$ to $\mathcal{B}(\hilbert_A\otimes\hilbert_B)$. While in the case of the CJ isomorphism, the resulting cone \emph{is exactly} the cone of bipartite states, in the case of the homocorrelation map, the cone is distinct from the cone of states. One of the main findings of this paper is that the correlations exhibited by bipartite states and the ones exhibited by quantum channels need not be equivalent. Furthermore, we find that this difference plays a role in their distinct joinability properties. The homocorrelation representation of channels provides us with a natural framework for exploring this difference: a channel and a state with differing correlations will be represented as \emph{distinct} operators in the \emph{same} operator space. These notions and their use in joinability 
are fleshed out in what follows.

The cone of positive operators plays a central role in defining joinability of quantum states. Analogously, the cone of homocorrelation-mapped channels (or ``channel-positive operators'') will play a central role in defining joinability of 
quantum channels. 
\begin{defn} {\bf [State-positivity]}
An operator $M \in \mathcal{B}(\hilbert)$ is \emph{state-positive} if $\trn{}(MP)\geq0$ for all hermitian projectors $P=P^{\dagger}=P^2 \in\mathcal{B}(\hilbert)$. We notate this condition as $M\geq_{st}0$ and emphasize that the resulting set is a self-dual cone.
\end{defn}
\noindent
Recall that a map $\map$ is a valid quantum channel if $\trn{}[\mathcal{J}(\map)P]\geq 0$ for all $P=P^2\in\mathcal{B}(\hilbert_A\otimes\hilbert_B)$ \cite{Choi1975}. Using the relationships of Fig. \ref{fig:choihomo}, we translate this condition to one on the homocorrelation-mapped operator $M=\mathcal{H}(\map)$. 
Specifically, we define:
\begin{defn} {\bf [Channel-positivity]}
An operator $M \in \mathcal{B}(\hilbert_A\otimes\hilbert_B)$ is \emph{channel-positive} with respect to the $A$-$B$ bipartition if $\trn{}(MP^{T_A})\geq0$ for all hermitian projectors $P=P^{\dagger}=P^2\in\mathcal{B}(\hilbert_A\otimes\hilbert_B)$. We notate this condition $M\geq_{ch}0$, and emphasize that the resulting set is, again, a self-dual cone.
\end{defn}

\begin{figure}[h]
\vspace*{-5mm}
\centering
\hspace*{1cm}\includegraphics[width=.35\columnwidth,viewport=120 290 500 670,clip]{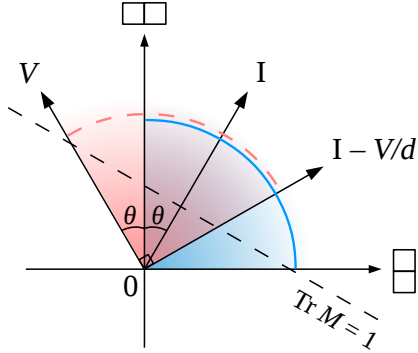}
\vspace*{-2mm}
\caption{State- and channel-positive cones for two qudit Werner operators. The region of the solid arc (blue) corresponds to state-positive operators, while the region of the dashed arc (pink) corresponds to channel-positive operators. The overlapping region, seen as purple, corresponds to PPT operators; of these, the normalized operators are also unentangled state-positive operators. The self-dual nature of the state- and channel-positive cones is consistent with the right angles of each cone's vertex. The Young diagrams represent the corresponding projectors into the symmetric $\frac{1}{2}(\identity +V)$ and antisymmetric $\frac{1}{2}(\identity -V)$ subspaces, respectively. For qudit dimension $d$, the angle $\theta$ is calculated to be $\cos{\theta}=\tr[V(\identity+V)]/\sqrt{\tr[V^2]\tr[(\identity+V)^2]}=\sqrt{({d+1})/{2d}}$.}
\label{fig:statechannelcones}
\end{figure}

In the general case, we can give a characterization of the intersection of the two cones and their complements. This is aided by the fact that the CJ isomorphism and the homocorrelation map are related to one another by partial transpose. A commutivity diagram of these relationships is given in Fig. \ref{fig:choihomo}, where the tensor network diagram calculus \cite{Biamonte2011} may be used to concisely demonstrate that $\mathcal{J}^{-1}\circ\mathcal{H}=\mathcal{H}^{-1}\circ\mathcal{J}=T_A$, up to normalization. 
\begin{prop}
A bipartite state $\rho\in\mathcal{B}(\hilbert_A\otimes\hilbert_B)$ and a quantum channel $\map:\mathcal{B}(\hilbert_A)\rightarrow\mathcal{B}(\hilbert_B)$ exhibit the same correlations if and only if the density operator (or equivalently, channel operator) has a positive partial transpose (PPT).
\label{thm:PPT}
\end{prop}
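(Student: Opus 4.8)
The plan is to reduce the claim to a single algebraic identity relating the two representations, $\mathcal{H}(\map) = [\mathcal{J}(\map)]^{T_A}$, and then to feed this identity through the correlation characterization already established in Proposition~\ref{thm:homocorrprop} together with Choi's positivity criterion for CP maps. First I would record that, by Proposition~\ref{thm:homocorrprop}, the state $\rho$ and the channel $\map$ exhibit the same correlations precisely when $\mathcal{H}(\map) = \rho$. This converts the statement about correlations into a statement about operator equality, which is what makes the partial-transpose bookkeeping tractable.

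Next I would establish the key identity. Reading off Eqs.~(\ref{choi}) and (\ref{HomoCor}), both $\mathcal{J}(\map)$ and $\mathcal{H}(\map)$ are sums of terms $\ket{i}\bra{j}\otimes\map(\cdot)$ differing only by the swap $\ket{i}\bra{j}\leftrightarrow\ket{j}\bra{i}$ in the $A$-factor, so transposing the $A$-subsystem of $\mathcal{J}(\map)$ and relabelling the summation indices yields $\mathcal{H}(\map)=[\mathcal{J}(\map)]^{T_A}$ \emph{exactly}. Equivalently, one uses $V/d_A=(\ketbra{\Phi^+})^{T_A}$ (from $d_A\ketbra{\Phi^+}=V^{T_A}$) and the fact that $\mathcal{I}_A\otimes\map$ commutes with $T_A$; this is precisely the relation $\mathcal{J}^{-1}\circ\mathcal{H}=T_A$ read off the commutativity diagram of Fig.~\ref{fig:choihomo}, where the tensor-network calculus makes the index relabelling automatic.

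With the identity in hand both directions are short. For the forward direction, sameness of correlations gives $\rho=\mathcal{H}(\map)=[\mathcal{J}(\map)]^{T_A}$, hence $\rho^{T_A}=\mathcal{J}(\map)$; because $\map$ is a channel and therefore completely positive, Choi's criterion gives $\mathcal{J}(\map)\geq0$, so $\rho^{T_A}\geq0$, and together with $\rho\geq0$ this is exactly PPT. For the converse, if $\rho$ is PPT then $\rho^{T_A}\geq0$, so defining $\map\equiv\mathcal{J}^{-1}(\rho^{T_A})$ makes $\map$ completely positive by the same criterion, and the identity returns $\mathcal{H}(\map)=[\mathcal{J}(\map)]^{T_A}=(\rho^{T_A})^{T_A}=\rho$, whence Proposition~\ref{thm:homocorrprop} delivers the equality of correlations.

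The one point requiring care, and the only genuine obstacle, is the precise meaning of ``quantum channel'' in the converse: complete positivity of $\map$ follows immediately from $\rho^{T_A}\geq0$, but trace preservation is an \emph{additional} constraint, equivalent to $\trn{B}[\rho^{T_A}]=(\rho_A)^{T}=\identity_A/d_A$, i.e.\ to $\rho$ having maximally mixed marginal on $A$. I would therefore either phrase the equivalence at the level of completely positive (possibly sub-normalized) maps, for which PPT is exactly the right condition, or note that in the collective-invariant (Werner) settings of interest the marginals are automatically maximally mixed, so that $\mathcal{J}^{-1}(\rho^{T_A})$ is a bona fide CPTP channel. In either case the essential content, namely PPT-ness $\Leftrightarrow$ complete positivity of the partial transpose under $\mathcal{J}^{-1}$, is untouched by this normalization caveat.
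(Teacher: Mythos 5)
Your proof is correct, and for the forward direction it takes essentially the same route as the paper: Proposition \ref{thm:homocorrprop} converts ``same correlations'' into the operator equality $\rho=\mathcal{H}(\map)$, the partial-transpose relation between $\mathcal{H}$ and $\mathcal{J}$ is invoked, and Choi's criterion yields $\rho^{T_A}\geq 0$. Where you genuinely go beyond the published argument: the paper's proof stops after this half, i.e.\ it establishes only the ``only if'' direction of the stated equivalence, whereas you also supply the converse by defining $\map\equiv\mathcal{J}^{-1}(\rho^{T_A})$ and verifying $\mathcal{H}(\map)=(\rho^{T_A})^{T_A}=\rho$. In doing so you correctly isolate the one real subtlety, which the paper passes over silently: PPT gives complete positivity of $\mathcal{J}^{-1}(\rho^{T_A})$ but not trace preservation, and since $\map$ trace-preserving forces $\trn{B}[\mathcal{H}(\map)]=\identity_A/d_A$, the converse as literally stated holds only for PPT states with maximally mixed $A$-marginal (automatic for the Werner/Brauer families the paper actually uses) or after relaxing ``channel'' to CP maps --- exactly the two repairs you propose. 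Note moreover that by Proposition \ref{thm:homocorrprop} the candidate map is \emph{unique} ($\map=\mathcal{H}^{-1}(\rho)$), so this caveat is intrinsic to the proposition, not an artifact of your construction. One further point in your favor: your identity $\mathcal{H}(\map)=[\mathcal{J}(\map)]^{T_A}$, with no extra factor of $d$, is the one consistent with the normalizations of Eqs.~(\ref{choi}) and (\ref{HomoCor}); the paper's proof writes $\mathcal{H}(\map)^{T_A}=\mathcal{J}(\map)/d$ (and says the two maps are related ``by a partial trace,'' evidently meaning partial transpose), a harmless slip since positivity is insensitive to positive scaling, but your bookkeeping is the accurate one.
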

\begin{proof}
By Prop. \ref{thm:homocorrprop}, if a bipartite state and a quantum channel exhibit the same correlations, then $\rho=\mathcal{H}(\map)$. Since the CJ isomorphism is related to the homocorrelation map by a partial trace, we also have 
$\mathcal{H}(\map)^{T_A}=\mathcal{J}(\map)/d$. By the positivity preservation of the CJ isomorphism, $\map$ being CPTP implies that $\mathcal{J}(\map)$ is a positive operator. Thus, we have that $\rho^{T_A}=\mathcal{J}(\map)/d$ is positive.
\end{proof}
This result may be used to directly connect quantum channels to entanglement:
\begin{coro}
If the correlations of a bipartite state $\rho\in\mathcal{B}(\hilbert_A\otimes\hilbert_B)$ cannot be exhibited by a quantum channel, then the state is entangled.
\end{coro}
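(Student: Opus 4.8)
The plan is to read the corollary as an immediate consequence of Proposition~\ref{thm:PPT} combined with the (easy direction of the) Peres positive-partial-transpose criterion, and to argue by contraposition. By Proposition~\ref{thm:PPT}, the correlations of $\rho$ can be exhibited by a quantum channel if and only if $\rho$ is PPT; equivalently, the correlations of $\rho$ \emph{cannot} be exhibited by any channel if and only if $\rho^{T_A}\not\geq 0$. The corollary therefore reduces to the single implication ``$\rho^{T_A}\not\geq 0 \Rightarrow \rho$ entangled,'' i.e.\ to the statement that positivity of the partial transpose is a \emph{necessary} condition for separability.

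The one ingredient not already supplied by the excerpt is this necessity, which I would establish directly. Taking the contrapositive, suppose $\rho$ is separable, so that $\rho=\sum_k p_k\,\sigma_k\otimes\tau_k$ with $p_k\geq 0$ and $\sigma_k,\tau_k\geq 0$. Applying the partial transpose on $A$ gives $\rho^{T_A}=\sum_k p_k\,\sigma_k^{T}\otimes\tau_k$, and since transposition maps positive operators to positive operators ($\sigma_k^{T}\geq 0$ whenever $\sigma_k\geq 0$) while a nonnegative combination of positive operators is again positive, we conclude $\rho^{T_A}\geq 0$. Thus every separable state is PPT; contrapositively, $\rho^{T_A}\not\geq 0$ forces $\rho$ to be entangled. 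Chaining this with the equivalence of Proposition~\ref{thm:PPT} yields the corollary immediately.

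The step I would treat most carefully is the invocation of Proposition~\ref{thm:PPT}, since that is where the genuinely channel-theoretic content lives: it is what guarantees that failure of exhibitability is \emph{exactly} the failure of the PPT condition, rather than some strictly stronger obstruction. By contrast, the separability input is deliberately the \emph{weak} half of the Peres criterion --- only necessity of PPT for separability, which holds in \emph{all} local dimensions --- so no appeal to the (dimension-restricted) converse ``PPT $\Rightarrow$ separable'' is needed, and the corollary holds without dimensional caveats. I would flag explicitly that ``exhibited by a quantum channel'' is to be understood in the sense made precise by Eq.~(\ref{eq:homocorrprop}) and Proposition~\ref{thm:PPT}, so that the homocorrelation preimage $\mathcal{H}^{-1}(\rho)$ being a legitimate (completely positive) map is synonymous with $\rho$ being PPT.
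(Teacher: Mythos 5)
Your proof is correct and takes essentially the same route as the paper's: both invoke Proposition~\ref{thm:PPT} to convert non-exhibitability of the correlations by a channel into failure of the PPT condition, and then conclude entanglement from the necessity of PPT for separability. The only difference is that where the paper simply cites the Peres--Horodecki criterion \cite{Peres1996}, you prove its easy (necessity) direction inline via $\rho^{T_A}=\sum_k p_k\,\sigma_k^{T}\otimes\tau_k\geq 0$, a harmless expansion of the cited step rather than a different argument.
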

\begin{proof}
Since the correlations cannot be exhibited by a quantum channel, the operator is not PPT, by Prop. \ref{thm:PPT}. Then, by the Peres-Horodecki criterion \cite{Peres1996}, the state is necessarily entangled.
\end{proof}
\noindent 
In Section \ref{sec:lp} we will return to the relationship between entanglement, quantum channels, and joinability.

\subsection{Generalization of joinability}
\label{subsec:genjoin}

We are now poised to use the homocorrelation representation to define the joinability of channels. The channel-positive operators provide an alternative set with which to define the allowed joining operators $W$. As a warm-up, we rephrase the channel-joinability problem that was posed in the Introduction. 
Consider quantum channels from $\hilbert_A$ to $\hilbert_B\otimes \hilbert_C$. Under the homocorrelation map, these correspond to tripartite operators lying in the channel-positive cone, notated $W_{A|BC}$. 
The partial traces $\tr_C$ and $\tr_B$ take channel-positive operators in $W_{A|BC}$ to channel-positive operators in $W_{A|B}$ and $W_{A|C}$, respectively; that is, operators in $W_{A|B}$ and $W_{A|C}$ correspond to valid quantum channels via the homocorrelation map.
The corresponding channel-joining scenario is then defined as $(W_{A|BC}, \{\trn{C},\trn{B}\})$.
A channel-joinability problem presents two channel operators $M_{AB}\in W_{A|B}$ and $M_{AC}\in W_{A|C}$ and seeks to determine the existence of a channel operator $M_{ABC}\in W_{A|BC}$ which reduces to the two channel operators in question. In general, we thus have the following:
\begin{defn}
{\bf [Channel-Joinability]} Given a joinability scenario described by the pair 
$(W \geq_{\text{ch}}0\},\{\trn{\ell_k}\})$, the reduced operators 
$\{M_k\}\in\{R_k\}$ are \emph{joinable} if there exists a joint operator $M\in W$ such that $\trn{\ell_k}(M) =M_k$ for all $k$.
\end{defn}
We note that a channel joinability (or extension) problem can be stated using the CJ isomorphism instead of the homocorrelation map, as done in \cite{Chen2012}. However, as we argued, the homocorrelation map provides a platform to 
{\em directly} compare the joinability of states and channels of equivalent correlations. 
For instance, it will allow us to {\em simultaneously} compare the joinability of 
local-unitary-invariant quantum states and channels, and consequently to compare these 
both to the joinability of analogous classical probability distributions (c.f. Fig. \ref{fig:csjoining}).


Before proceeding to the general notion of joinability, 
we also remark that allowed joining operators in $W$ have thus far been considered to be either state-positive or channel-positive. However, from a mathematical standpoint, a sensible joinability problem only needs $W$ to be a {\em convex cone}. To investigate this generalization and (as motivated later) to meld state and channel joining, we consider a third type of positivity that we call {\em local-positivity}. This notion is equivalent to both block-positivity \cite{Jamiolkowski1974} and to map-positivity (not necessarily CP) \cite{Bengtsson2006,Bhatia2009}, in that by representing linear maps using the homocorrelation map, the cone of (transformed) positive maps is equal to the cone of bipartite block-positive operators. 
Formally:
\begin{defn} {\bf [Local-positivity]}
An operator $M \in \mathcal{B}(\hilbert_A\otimes\hilbert_B)$ is \emph{local-positive} with respect to the $A$-$B$ factorization if $\trn{}(MP_A\otimes P_B)\geq0$ for all pure states $P_A=P_A^2\in\mathcal{B}(\hilbert_A)$ and $P_B=P_B^2\in\mathcal{B}(\hilbert_B)$. We notate this condition $M\geq_{loc}0$.
\end{defn}
\noindent 
The set of channel-positive operators and state-positive operators are each subsets (specifically, subcones) of the local-positive operators, as local-positivity clearly is a weaker condition. Local-positive operators are directly relevant to QIP, in particular because they may serve as an entanglement witnesses \cite{Terhal2000}. Moreover, in comparing quantum-joinability limitations to analogous limitations stemming from classical probability theory, joinability scenarios defined with respect to $W\geq_{loc} 0$ may allow the identification of quantum limitations in a ``minimally constrained" setting, closer to the (less strict) classical boundaries. In Sec. \ref{sec:lp}, we find that local-positivity does nevertheless provide stricter-than-classical limitations on joinability.

Another way of viewing the various definitions of positivity is to understand the subscript on the inequality to indicate the dual cone from which inner products with $M$ must be positive. For $M\geq_{\text{st}}0$, $M\geq_{\text{ch}}0$, and $M\geq_{\text{loc}}0$, the respective dual cones are the positive span of rank-one projectors, the positive span of 
partially-transposed projectors, and the positive span of product projectors (from which the trace-one condition confines to the set of separable states). We note that the first two cones are self-dual (and are furthermore, symmetric cones \cite{Jordan1934}), while the local-positive cone is not.
With several important examples of positivity established,
each being a different convex set with which to define $W$, we are in a position to give the following:
\begin{defn} {\bf [General Quantum Joinability]}
\label{defn:genjoin}
Let $W$ be a convex cone in $\mathcal{B}(\hilbert^{(N)})$, and $\{\trn{\ell_k}\}$ be partial traces with $\ell_k\subset\mathbb{Z}_N$. Given the joinability scenario $(W,\{\trn{\ell_k}\})$, the operators $\{M_k\}\in\{R_k\}$ are \emph{joinable} if there exists a joining operator $w\in W$ such that $\trn{\ell_k}(w) =M_k$ for all $k$.
\end{defn}

\noindent
This general definition naturally encompasses the various joinability problems referenced in the Introduction.
Specifically, in the case where $W$ is the set of quantum states on a multipartite system, the joinability problem 
reduces to the quantum marginal problem, while if $W$ consists of channel-positive operators describing quantum channels from one multipartite system to another, one recovers the channel-joining problem instead. Specific instances of this problem are the optimal asymmetric cloning problem \cite{Cerf2000,Iblisdir2005,Kay2012}, the symmetric cloning problem \cite{Werner1998,Zanardi1998}, and the $k$-extendibility problem for quantum maps \cite{Horodecki2013Sep}.
In addition to providing a unified perspective, our approach has the important advantage that different 
classes of joinability problems may be mapped into one another, in such a way that a solution to one provides a solution to another. This is made formal in the following:
\begin{prop}
\label{thm:joinhomomorphism}
Let $W$ and $W'$ be two positive cones 
of operators
acting 
on the space $\hilbert^{(N)}$, let $\{\tr_{\ell_k}\}$ be a set of partial traces that apply to both cones, and let $\phi: W \rightarrow W'$ be a positivity-preserving (homo)morphism, which permits reduced actions $\phi_k$ satisfying  $\phi_k\circ\tr_{\ell_{k}}=\tr_{\ell_{k}}\circ\,\phi$. If $\{M_k\}\in\{\tr_{\ell_k}(W)\}$ is joinable with respect to $W$, then $\{\phi_k(M_k)\}\in\{\tr_{\ell_k}(W')\}$ is joinable with respect to $W'$.
\end{prop}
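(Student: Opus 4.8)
The plan is a direct diagram chase: I would take the joining operator guaranteed by the hypothesis, push it through $\phi$, and check that the resulting operator both lies in $W'$ and reduces to the desired marginals. The entire content of the statement is packaged in the intertwining relation $\phi_k\circ\tr_{\ell_k}=\tr_{\ell_k}\circ\,\phi$ together with the assumption that $\phi$ carries $W$ into $W'$, so the proof amounts to deploying these two facts in the right order.

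Concretely, first I would invoke joinability of $\{M_k\}$ with respect to $W$: by Definition \ref{defn:genjoin} there is a joining operator $w\in W$ with $\tr_{\ell_k}(w)=M_k$ for all $k$. This $w$ is the only datum supplied, so I would use it to build the candidate witness $w'\equiv\phi(w)$. Because $\phi$ is a positivity-preserving morphism from $W$ to $W'$, the image $w'$ lies in $W'$, which is exactly the admissibility requirement for a joining operator in the scenario $(W',\{\tr_{\ell_k}\})$. It then remains only to verify the marginals, by the computation
\[
\tr_{\ell_k}(w')=(\tr_{\ell_k}\circ\,\phi)(w)=(\phi_k\circ\tr_{\ell_k})(w)=\phi_k(M_k),
\]
in which the middle equality is the assumed compatibility and the last uses $\tr_{\ell_k}(w)=M_k$. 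This shows simultaneously that each $\phi_k(M_k)=\tr_{\ell_k}(w')$ belongs to $\{\tr_{\ell_k}(W')\}$ and that $w'$ joins them, so by Definition \ref{defn:genjoin} the family $\{\phi_k(M_k)\}$ is joinable with respect to $W'$.

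I do not expect a substantive obstacle here: once the witness $w$ is transported by $\phi$, the verification is essentially a one-line calculation, and no property of $\phi$ beyond positivity-preservation and the commuting-square relation is needed for the bare existence claim. The only points demanding care are bookkeeping ones — certifying that $\phi(w)$ genuinely lands in $W'$ (where ``positivity-preserving'' is used) and keeping track of the reduced spaces so that the reduced actions $\phi_k$ compose correctly with the partial traces, which is precisely the content of the intertwining hypothesis.
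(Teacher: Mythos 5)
Your proposal is correct and follows the same route as the paper's own proof: transport the joining witness $w$ to $\phi(w)\in W'$ and verify the marginals via the intertwining relation $\tr_{\ell_k}\circ\,\phi=\phi_k\circ\tr_{\ell_k}$. Your write-up is merely a more explicit version of the paper's one-line argument, with no substantive difference.
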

\begin{proof}
Assume that $w$ is a valid joining operator for the set of operators $\{M_k\}\in\{\tr_{\ell_k}(W)\}$. Then, the set of operators $\{\phi_k(M_k)\}\in\{\tr_{\ell_k}(W')\}$ is joined by the operator $\phi(w)$, since $\tr_{\ell_k}[\phi(w)]=\phi_k(\tr_{\ell_k}[w])=\phi_k(M_k)$ and $\phi(w)\in W'$.
\end{proof}
This is shown in the commutative diagram of Fig. \ref{fig:comm}. We use a stronger corollary of this result in the remaining sections:
\begin{coro}
\label{thm:joinisomorphism}
Let $\phi$ be a one-to-one positivity-preserving map from $W$ to $W'$, with invertible reduced actions $\phi_k$ satisfying $\phi_k\circ\tr_{\ell_{k}}=\tr_{\ell_{k}}\circ\,\phi$ (and similarly for their inverses). Then a set of operators $\{M_k\} \in \{\tr_{\ell_k}(W)\}$ is joinable if and only if the set of operators $\phi_k(M_k)$ is joinable.
\end{coro}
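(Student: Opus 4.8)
The plan is to deduce both implications directly from Proposition \ref{thm:joinhomomorphism}, applying it once to $\phi$ and once to its inverse. The forward implication is immediate: if $\{M_k\}$ is joinable with respect to $W$, then Proposition \ref{thm:joinhomomorphism}, applied to the morphism $\phi:W\rightarrow W'$ with reduced actions $\phi_k$, yields that $\{\phi_k(M_k)\}$ is joinable with respect to $W'$. Nothing beyond the proposition is required here.

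For the converse, I would invoke the hypotheses that $\phi$ is one-to-one and positivity-preserving and that the reduced actions $\phi_k$ are invertible, with inverses satisfying the same intertwining relation $\phi_k^{-1}\circ\tr_{\ell_k}=\tr_{\ell_k}\circ\,\phi^{-1}$. These are precisely the conditions needed to apply Proposition \ref{thm:joinhomomorphism} a second time, now to the morphism $\phi^{-1}:W'\rightarrow W$ with reduced actions $\phi_k^{-1}$. Assuming $\{\phi_k(M_k)\}$ is joinable with respect to $W'$, the proposition then gives that $\{\phi_k^{-1}(\phi_k(M_k))\}=\{M_k\}$ is joinable with respect to $W$, which is the desired reverse implication.

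The only point requiring care is to confirm that $\phi^{-1}$ genuinely meets the premises of Proposition \ref{thm:joinhomomorphism}, namely that it is a positivity-preserving morphism from $W'$ to $W$ and that its reduced actions intertwine with the partial traces in the appropriate way. Both of these are furnished directly by the corollary's hypotheses (the clause ``and similarly for their inverses''), so I do not expect any genuine obstacle: the content of the corollary is simply that the biconditional follows by applying the one-directional proposition symmetrically in $\phi$ and $\phi^{-1}$. Were one to relax the assumptions, the main subtlety would be deriving positivity-preservation of $\phi^{-1}$ from that of $\phi$ together with bijectivity; but since this is assumed outright, the proof reduces cleanly to the two applications above.
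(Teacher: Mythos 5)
Your proof is correct and follows essentially the same route as the paper: the forward implication is a direct application of Proposition \ref{thm:joinhomomorphism} to $\phi$, and the reverse follows by exploiting invertibility to run the proposition in the opposite direction. If anything, your direct application of the proposition to $\phi^{-1}$ with reduced actions $\phi_k^{-1}$ is a slightly cleaner phrasing than the paper's terse appeal to ``the contrapositive'' of the proposition together with invertibility, which amounts to the same argument.
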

\begin{proof}
The forward implication follows from Proposition \ref{thm:joinhomomorphism}, while the backwards implication follows from the fact that $\phi$ and the $\phi_k$ are invertible, along with the contrapositive of Proposition \ref{thm:joinhomomorphism}.
\end{proof}
The joinability-problem isomorphism we make use of is the partial transpose map. The latter permits a natural reduced action, namely, partial transpose on the remaining of the previously transposed subsystems. As explained, the partial transpose is a positivity-preserving bijection between states and channel operators (via ${\cal H}$). Thus, if we determine the joinable-unjoinable demarcation for a class of states, we will determine the joinable-unjoinable demarcation for a corresponding class of channel-operators. 

\begin{figure}
\centering
\includegraphics[width=.35\columnwidth,viewport=50 410 500 710,clip]{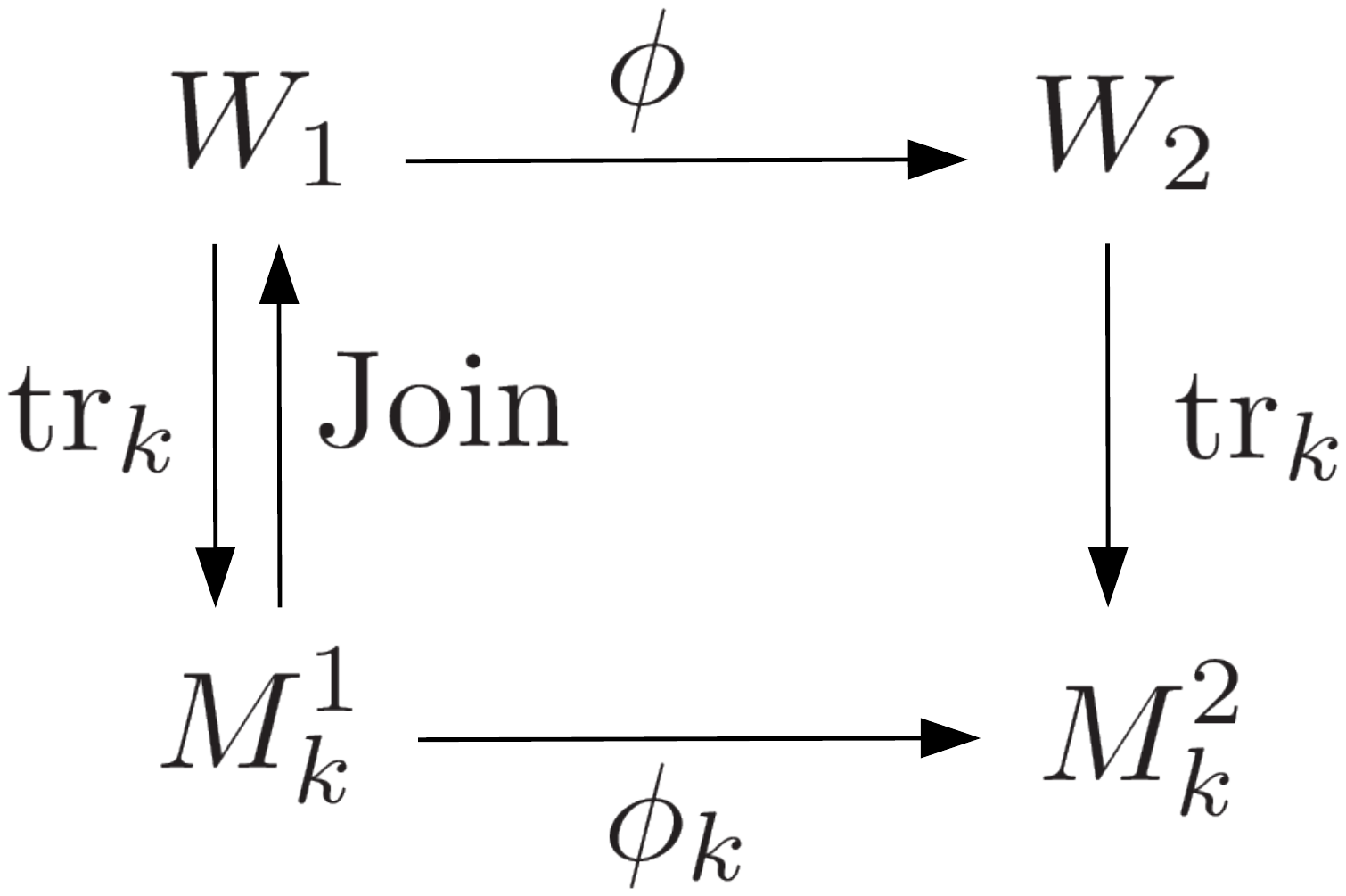}
\vspace*{-3mm}
\caption{Commutivity diagram showing a homomorphism of joinability problems.}
\label{fig:comm}
\end{figure}

\section{Three-party joinability settings with collective invariance}
\label{sec:tripartite}

In this Section, we obtain an exact analytical characterization of the state-joining, channel-joining, and 
local-positive joining problems in the three-party scenario, 
by taking advantage of {\em collective unitary invariance}. That is, we determine
what trio of bipartite operators $M_{AB}$, $M_{AC}$, and $M_{BC}$ may be joined by a valid joining operator 
$w_{ABC}$, subject to the appropriate symmetry constraints. As noted, the most familiar case is state joinability, 
whereby the bipartite operators along with the joining tri-partite operator are state-positive. The next case considered is referred to as ``1-2 channel joinability'': here, we specify a bipartition of the systems (say, $A|BC$) and consider the bipartite operators which cross the bipartition ($M_{AB}$ and $M_{AC}$), along with the joining operator, to be 
channel-positive with respect to the bipartition, while the remaining bipartite operator ($M_{BC}$) is state-positive. Since each of the three possible bipartition choices ($A|BC$, $B|AC$, and $C|AB$) constitutes a different channel joinability scenario, a total of four possibilities arise for three-party state/channel joinability. 
Lastly, motivated by the suggestive symmetry arising from these results and their relation to classical joining, we consider the weaker notion of local-positive joining, in which all operators involved are only required to be local-positive.

\subsection{Joinability limitations from state-positivity and channel-positivity}
\label{Werner}

We begin by describing the operators which are to be joined. The bipartite reduced operators inherit the collective unitary invariance from the tripartite operators from which they are obtained. Therefore, by a standard result of representation theory \cite{Weyl1997}, 
the operators which are to be joined are of the following form:
\begin{equation}\label{eq:Werner}
\rho(\eta)=(1-\eta)\frac{\identity}{d^2}+\eta \frac{V}{d},
\end{equation}
where $V$ is the swap operator defined earlier.
The above operators are known to be state-positive for the range $-\frac{1}{d-1}\leq\eta \leq\frac{1}{d+1}$, 
corresponding to the $d$-dimensional (qudit) Werner states we already mentioned. The parameterization is chosen so that 
$\eta$ is a ``correlation'' measure: if  $d=2$, $\eta=-1$ corresponds to the singlet state, $\eta=0$ to the maximally mixed state, 
while $\eta=1$ is not a valid quantum state, but expresses perfect correlation for all possible collective measurements. 
Note that a value $\eta=1$, for instance, does correspond to a valid quantum channel. Intuitively, channel-positive operators with 
$U\otimes U$-invariance correspond to depolarizing channels. It is known that complete positivity (or channel-positivity) of the depolarizing channel is ensured provided that $-\frac{1}{d^2-1}\leq\eta \leq1$ \cite{King2003}. However, we find it instructive to independently establish state- and channel- positivity bounds using the CJ isomorphism. 

To this end, we enlarge the above class of $U\otimes U$-invariant operators to the class of operators with {\em collective orthogonal invariance}, namely, invariance under transformations of the more general form $O\otimes O$, belonging to the so-called {\em Brauer algebra} \cite{Brauer1937, Nazarov1996}\footnote{Operators in this algebra have been extensively analyzed in \cite{Werner2002,Keyl2002}, and recent work characterizing their irreducible representations may be found in \cite{Horodecki2013May, Horodecki2013Aug}. The Brauer algebra acting on $N$ $d$-dimensional Hilbert spaces is spanned by representations of subsystem permutations $\{V_{\pi}|\pi\in S_N\}$, along with their partial transpositions with respect to groupings of the subsystems $\{V_{\pi}^{T_l}|\pi\in S_N, l\subseteq\{1,\ldots,N\}\}$. In terms of tensor network diagrams, each element of this basis is represented by a set of disjoint pairings of $2N$ vertices, with the vertices arranged in two rows, both containing $N$ of them.}.
In addition to $U\otimes U$-invariant operators, the Brauer algebra also contains 
$U^*\otimes U$-invariant operators. The latter class of operators, which includes the well-known isotropic states, 
are spanned by the operators $\identity$ and $ V^{T_A}$. Thus, the set of $O\otimes O$-invariant operators are of the form
\begin{equation}\label{eq:Brauer}
\rho(\eta, \beta)=(1-\eta-\beta)\frac{\identity}{d^2}+\eta \frac{V}{d}+\beta \frac{V^{T_A}}{d}.
\end{equation}
In particular, the operator $\rho(0,1)$ is a generic Bell state on two qudits, 
$\rho(0,-1/(d-1))$ is the maximally entangled Werner state (namely, the singlet state for $d=2$), $\rho(1,0)$ is the identity channel, and $\rho(0,0)$ is the completely mixed state (or the completely depolarizing channel). We can then establish the following:

\begin{prop}
\label{prop1}
A bipartite operator $\rho(\eta)$ with collective unitary invariance is channel-positive 
if and only if $-\frac{1}{d^2-1}\leq\eta \leq1$.
\end{prop}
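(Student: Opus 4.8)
The plan is to reduce channel-positivity of $\rho(\eta)$ to ordinary positive semidefiniteness of a single operator—its partial transpose—and then to read off the bounds from a two-line spectral computation. First I would note that, for any hermitian operator $M$ and any projector $P$, the Hilbert--Schmidt self-adjointness of the partial transpose gives $\tr[M P^{T_A}] = \tr[M^{T_A} P]$. Hence $M \geq_{\text{ch}} 0$, i.e.\ $\tr[M P^{T_A}]\geq 0$ for all projectors $P=P^2$, holds if and only if $\tr[M^{T_A}P]\geq 0$ for all such $P$, which by the self-duality of the state-positive cone (Definition of state-positivity) is exactly the statement that $M^{T_A} \geq_{\text{st}} 0$, i.e.\ that $M^{T_A}$ is positive semidefinite. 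This is the content of Proposition \ref{thm:PPT} together with the relationship $\mathcal{J}^{-1}\circ\mathcal{H} = T_A$ (up to normalization) recorded in Fig.\ \ref{fig:choihomo}: channel-positivity of $\rho(\eta)$ is the same as positivity of its associated Choi operator $\rho(\eta)^{T_A}$.

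Next I would compute this partial transpose explicitly. Using $\identity^{T_A}=\identity$ together with the identity $V^{T_A} = d\,\ketbra{\Phi^+}$ recorded earlier in the text (with $\ket{\Phi^+}=\sum_i\ket{ii}/\sqrt{d}$), I obtain
\[
\rho(\eta)^{T_A} = (1-\eta)\frac{\identity}{d^2} + \eta\,\ketbra{\Phi^+},
\]
which is exactly the isotropic operator form encountered earlier. Since $\ketbra{\Phi^+}$ is a rank-one projector, this operator is already diagonalized: on the one-dimensional range of $\ketbra{\Phi^+}$ it has eigenvalue $(1-\eta)/d^2 + \eta$, and on the complementary $(d^2-1)$-dimensional subspace it has the degenerate eigenvalue $(1-\eta)/d^2$.

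Finally I would impose that both eigenvalues be nonnegative. Nonnegativity of the degenerate eigenvalue $(1-\eta)/d^2$ gives $\eta \leq 1$, while nonnegativity of $(1-\eta)/d^2 + \eta$ rearranges to $1 + \eta(d^2-1)\geq 0$, i.e.\ $\eta \geq -\frac{1}{d^2-1}$. Intersecting the two conditions yields precisely $-\frac{1}{d^2-1}\leq \eta \leq 1$, as claimed.

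I do not expect a genuine obstacle here; the proposition is essentially a spectral identity once the correct reduction is in place. The only step requiring care is the first one, namely justifying that channel-positivity (testing against partially transposed projectors) collapses to ordinary positive semidefiniteness of $\rho(\eta)^{T_A}$; everything downstream is a routine eigenvalue count. As a consistency check, I would note that this recovers the known complete-positivity window of the qudit depolarizing channel \cite{King2003} quoted just before the statement, and that the value $\eta=1$ saturating the upper bound corresponds to the identity channel, as expected.
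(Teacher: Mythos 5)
Your proof is correct, and it shares the paper's essential first move: channel-positivity is converted into ordinary positive semidefiniteness through the partial transpose, i.e.\ the relation $\mathcal{H}(\map)^{T_A}=\mathcal{J}(\map)/d$ underlying Proposition \ref{thm:PPT}, and your justification via $\trn{}(MP^{T_A})=\trn{}(M^{T_A}P)$ and self-duality of the PSD cone is sound. Where you genuinely diverge is in the spectral step. The paper never diagonalizes $\rho(\eta)^{T_A}$ directly; it instead embeds the problem in the two-parameter Brauer family $\rho(\eta,\beta)$ of Eq.\ (\ref{eq:Brauer}), computes the three eigenvalues on the eigenspaces $P_A$, $P_+$, $P_Y$ to characterize \emph{state}-positivity of the whole family, and then extracts channel-positivity from the observation that $\mathcal{J}$ swaps $\eta$ with $\beta$; setting $\beta=0$ in the swapped inequalities gives Proposition \ref{prop1}. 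You work with the one-parameter family alone and exploit that its partial transpose is exactly the isotropic operator $(1-\eta)\identity/d^2+\eta\ketbra{\Phi^+}$, a rank-one perturbation of the identity with only two distinct eigenvalues, $(1-\eta)/d^2+\eta$ and $(1-\eta)/d^2$, from which $-1/(d^2-1)\le\eta\le 1$ follows in two lines. Your route is shorter and self-contained, and your sanity checks ($\eta=1$ as the identity channel, agreement with the known depolarizing-channel window) are apt. The paper's detour buys economy elsewhere: the same three-eigenvalue computation simultaneously re-derives the state-positive Werner window $-1/(d-1)\le\eta\le 1/(d+1)$, yields the full Brauer positivity diagram of Fig.\ \ref{fig:brauersummary}, and sets up the local-positivity analysis of Proposition \ref{prop2}, none of which your specialized argument provides. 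For the proposition as stated, however, your argument is complete and arguably the cleaner proof.
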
 
\begin{proof}
The Brauer algebra includes all the state-positive operators which are mapped, via the CJ isomorphism ${\cal J}$, to the $U\otimes U$-invariant channel-positive operators; under ${\cal J}$, $\eta$ and $\beta$ in Eq. (\ref{eq:Brauer}) are swapped with one another. Since ${\cal J}$ takes state-positive operators to channel-positive operators, we need only obtain the set of state-positive operators. State-positivity of these operators is enforced by the inner products with respect to their (operator) eigenspaces being non-negative. Such eigenspaces are $P_A$, $P_+$, and $P_Y$, independent of $\eta$ and $\beta$: the first is the anti-symmetric subspace, the second is the one-dimensional space spanned by $\ket{\Phi^+} \equiv \sqrt{1/d} \sum_i \ket{ii}$, and the third is the space spanned by vectors $\ket{y}$ satisfying $\bra{y}(\ket{y})^*=0$\footnote{Both the definition of $\ket{\Phi^+}$ and the use of complex conjugation are {\em basis-dependent} notions. It is understood 
that all usages of either refer to the same (arbitrary) choice of basis.}. The eigenvalues are as follows:
\begin{eqnarray*}
\rho(\eta,\beta)P_A &=[(1-\eta-\beta)/d^2-\eta/d]P_A ,\\
\rho(\eta,\beta)P_+ &=[(1-\eta-\beta)/d^2+\eta/d+\beta]P_+,\\
\rho(\eta,\beta)P_Y &=[(1-\eta-\beta)/d^2+\eta/d]P_Y.
\end{eqnarray*}
Hence, state-positivity of the bipartite Brauer operators is ensured by
\begin{eqnarray*}
1  \geq (d+1)\eta+\beta, \quad 
1  \geq -(d-1)\eta-(d^2-1)\beta, \quad 
1  \geq -(d-1)\eta+\beta.
\end{eqnarray*}
The inequalities bounding channel-positivity are obtained by swapping the $\eta$s 
and $\beta$s.
In particular, we recover that the state-positive range for  $U\otimes U$-invariant operators is 
$-\frac{1}{d-1} \leq \eta \leq \frac{1}{d+1}$, whereas the channel-positive range  is $-\frac{1}{d^2-1}\leq\eta\leq1$.
\end{proof}

In a similar manner, we can also obtain the ranges of local-positivity:
\begin{prop}
\label{prop2}
A bipartite operator $\rho(\eta)$ with collective unitary invariance is local-positive 
if and only if $-\frac{1}{d-1}\leq\eta\leq 1$.
\end{prop}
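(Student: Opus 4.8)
The plan is to work directly from the definition of local-positivity, which demands only that $\tr[\rho(\eta)\,P_A\otimes P_B]\geq 0$ for all \emph{pure-state} projectors $P_A=\ketbra{a}$ and $P_B=\ketbra{b}$ on $\hilbert_A=\hilbert_B=\hilbert$. This is the decisive simplification: unlike state- and channel-positivity, whose tests involve entangled projectors and hence require the decomposition into symmetric/antisymmetric eigenspaces used in Prop.~\ref{prop1}, local-positivity is probed only by \emph{product} projectors, so the entire problem collapses to a single scalar inequality.

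First I would evaluate the two terms of $\rho(\eta)=(1-\eta)\frac{\identity}{d^2}+\eta\frac{V}{d}$ against the product projector. The identity term contributes $\tr[\frac{\identity}{d^2}\,\ketbra{a}\otimes\ketbra{b}]=\frac{1}{d^2}$, while for the swap term I would invoke the standard identity $\tr[V(X\otimes Y)]=\tr[XY]$ (with $V=\sum_{ij}\ket{ij}\bra{ji}$), which gives $\tr[\frac{V}{d}\,\ketbra{a}\otimes\ketbra{b}]=\frac{1}{d}|\inprod{a}{b}|^2$. Collecting the two pieces,
\[
\tr[\rho(\eta)\,P_A\otimes P_B]=(1-\eta)\frac{1}{d^2}+\eta\,\frac{|\inprod{a}{b}|^2}{d}.
\]

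Next I would set $x\equiv|\inprod{a}{b}|^2$ and observe that, as $\ket{a}$ and $\ket{b}$ range over all pure states in a space of dimension $d\geq 2$, the overlap $x$ attains every value in the full interval $[0,1]$ (the endpoints being realized by orthogonal and by identical vectors). Since the expression above is \emph{affine} in $x$, its minimum over $[0,1]$ occurs at an endpoint, so local-positivity is equivalent to non-negativity at both $x=0$ and $x=1$. Evaluating $x=0$ gives $\frac{1-\eta}{d^2}\geq 0$, i.e.\ $\eta\leq 1$, while $x=1$ gives $\frac{1+(d-1)\eta}{d^2}\geq 0$, i.e.\ $\eta\geq -\frac{1}{d-1}$; together these yield the claimed range $-\frac{1}{d-1}\leq\eta\leq 1$.

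There is no serious obstacle in this argument; the only points requiring any care are the reduction to product projectors (immediate from the definition) and the fact that $|\inprod{a}{b}|^2$ sweeps out all of $[0,1]$, which makes checking the two endpoints both necessary and sufficient. As a consistency check I would note that the resulting interval $[-\frac{1}{d-1},1]$ is precisely the union of the state-positive range $[-\frac{1}{d-1},\frac{1}{d+1}]$ and the channel-positive range $[-\frac{1}{d^2-1},1]$ established in Prop.~\ref{prop1}, in agreement with local-positivity being the weaker condition whose cone contains both the state- and channel-positive cones.
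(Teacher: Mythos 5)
Your proof is correct, and it takes a genuinely more direct route than the paper's. The paper does not treat $\rho(\eta)$ in isolation: it embeds it in the two-parameter Brauer family $\rho(\eta,\beta)$ of $O\otimes O$-invariant operators and tests nonnegativity against four distinguished families of product vectors, $\ket{xx}$, $\ket{x\bar{x}}$, $\ket{yy}$, $\ket{y\bar{y}}$, selected by their behavior under complex conjugation ($\ket{x}^*=\ket{x}$, $\ket{y}^*=\ket{\bar{y}}$); this yields the full local-positivity region in the $(\eta,\beta)$ plane, from which the stated interval follows by setting $\beta=0$. The conjugation structure matters there because the $V^{T_A}$ term sees the phase relation between the two local vectors, and the paper's argument implicitly relies on these four families being extremal for the symmetric family. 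Your argument sidesteps all of this: on the $\beta=0$ line the expectation $\tr[\rho(\eta)\,\ketbra{a}\otimes\ketbra{b}]=(1-\eta)/d^2+\eta\,|\inprod{a}{b}|^2/d$ depends only on the single scalar $x=|\inprod{a}{b}|^2\in[0,1]$ and is affine in it, so extremality at the endpoints $x=0,1$ is automatic and both necessity and sufficiency are immediate — indeed the paper's four constraints collapse to your two endpoint conditions when $\beta=0$, since the distinction between real and conjugate-paired vectors is invisible to $V$ alone. What the paper's heavier computation buys is the full $(\eta,\beta)$ characterization, which it needs elsewhere (the decomposability remark that local-positive Brauer operators are convex combinations of state- and channel-positive ones, and Fig.~\ref{fig:brauersummary}); your proof establishes exactly the stated proposition, more transparently, but would need the extra parameter reinstated to recover those downstream results. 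Your closing consistency check (the interval is the union of the state- and channel-positive ranges from Prop.~\ref{prop1}) matches the paper's decomposability observation restricted to the $U\otimes U$ line.
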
 
\begin{proof}
Local positivity is ensured by the non-negativity of expectation values with respect to the product vectors $\{\ket{x x},\ket{x \bar{x}},\ket{y y},\ket{y \bar{y}}\}$, satisfying $\ket{x}^*=\ket{x}$ and $\ket{y}^*=\ket{\bar{y}}$, where the bar indicates a vector orthogonal to the original vector. In terms of $\eta$ and $\beta$, these constraints read \begin{eqnarray*}
0\leq \exval{\rho(\eta,\beta)}{xx}&=(1-\eta-\beta)/d^2+\eta/d+\beta/d,\\
0\leq \exval{\rho(\eta,\beta)}{x\bar{x}}&=(1-\eta-\beta)/d^2,\\
0\leq \exval{\rho(\eta,\beta)}{yy}&=(1-\eta-\beta)/d^2+\eta/d,\\
0\leq \exval{\rho(\eta,\beta)}{y\bar{y}}&=(1-\eta-\beta)/d^2+\beta/d.
\end{eqnarray*}
More compactly, these boundaries are given by
\begin{eqnarray*}
-\frac{1}{d-1}\leq \eta +\beta \leq 1, \quad
 -(d-1)\eta+\beta \leq 1, \quad 
\eta -(d-1)\beta \leq 1.
\end{eqnarray*}
Thus, for bipartite Brauer operators, local-positive operators are equivalent to convex combinations of state- and channel-positive operators\footnote{This property is known as \emph{decomposability} \cite{Bengtsson2006}. Interestingly, such an equivalence also holds for {\em arbitrary} bipartite qubit states \cite{Woronowicz1976}.}. In particular, for the local-positive range of $U\otimes U$-invariant operators, it follows that $-\frac{1}{d-1}\leq\eta\leq 1$, as stated.
\end{proof}

\begin{figure}[t]
\centering
\subfigure[]{
\hspace*{1cm}\includegraphics[width=.25\columnwidth,viewport=50 180 410 540,clip]{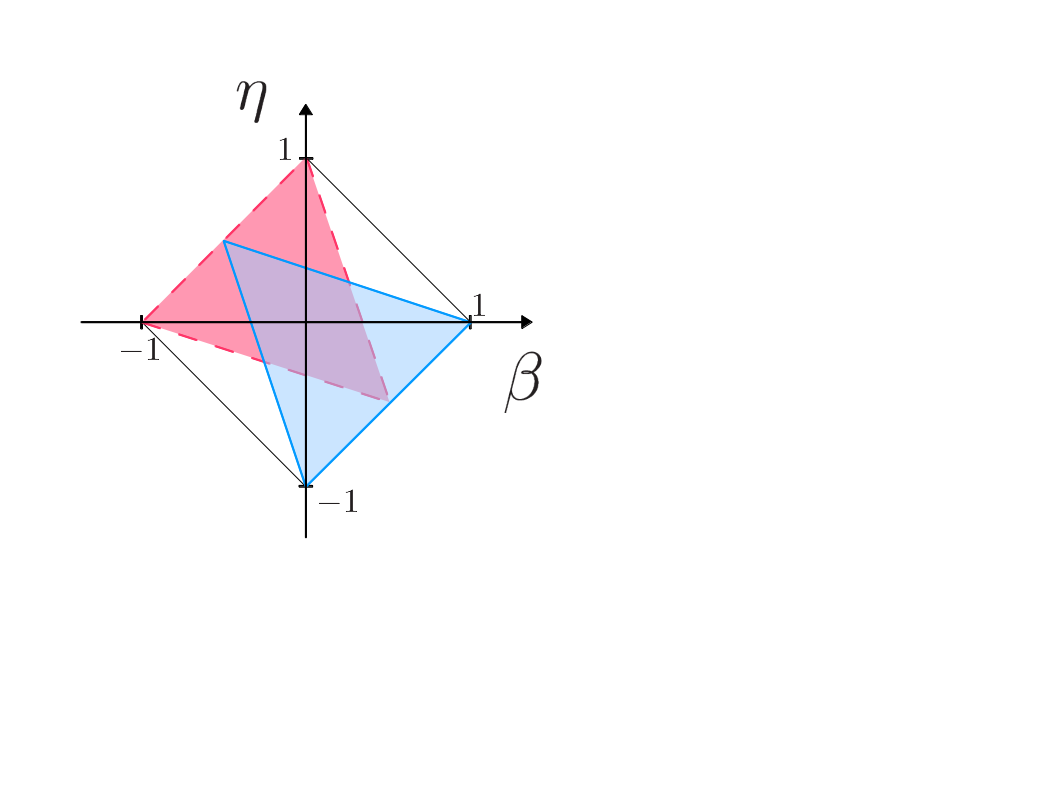}}
\hspace*{1cm}
\subfigure[]{
\includegraphics[width=.24\columnwidth,viewport=170 280 400 530,clip]{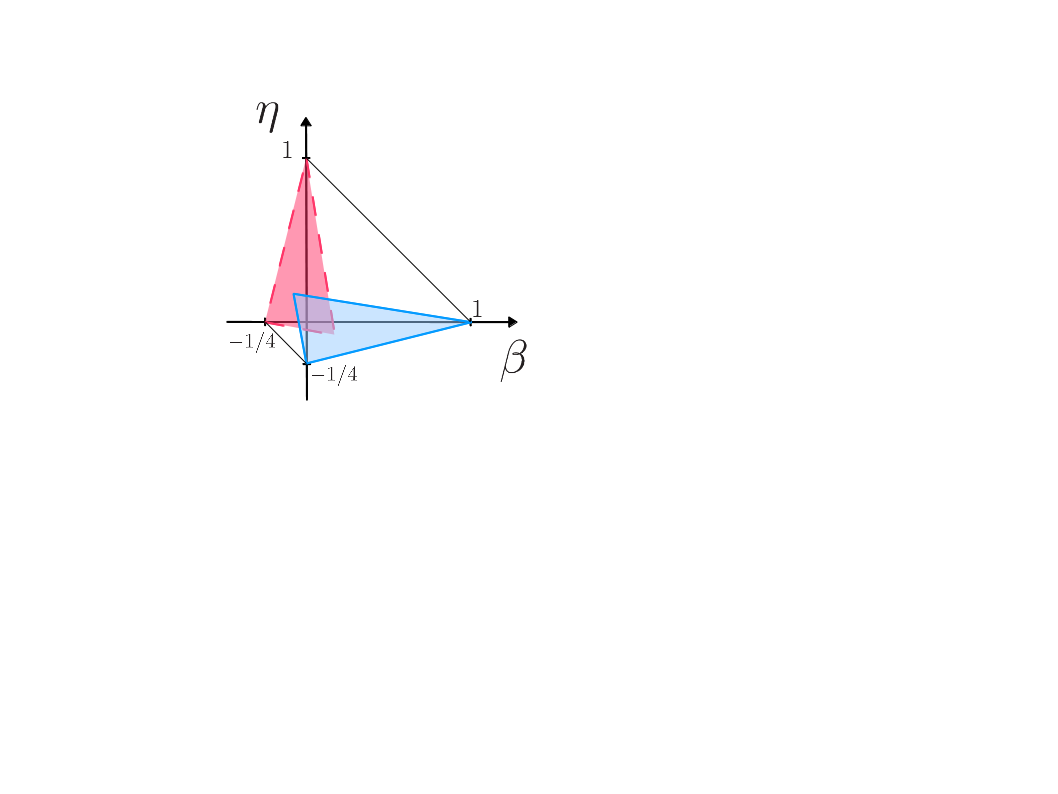}}
\vspace*{-2mm}
\caption{Positivity regions for bipartite Brauer operators: (a) $d=2$. (b) $d=5$. The solid triangle (blue) encloses the state-positive region, the dashed triangle (pink) encloses the the channel-positive region, and the outer boundary encloses the local-positive region. }
\label{fig:brauersummary}
\end{figure}
\noindent 
A pictorial summary of the three positivity bounds 
is presented in Fig. \ref{fig:brauersummary}. 

Having characterized all types of positivity for the (bipartite) operators to be joined, we now turn to characterize the positivity for the (tripartite) joining operators. For each positive tripartite set ($W\geq_{st}0$, $W\geq_{ch}0$, and $W\geq_{loc}0$), we obtain the trios of joinable bipartite operators by simply applying the three partial traces $(\Tr_{A},\Tr_{B},\Tr_{C})$ to each positive operator. In more detail, our approach is to obtain an expression for the positivity boundary of the tripartite operators in terms of operator space coordinates, and then re-express this boundary in terms of reduced-state parameters (the three Werner parameters in this case). For state- and channel-positivity, the desired characterization follows directly from the analysis reported in our previous work \cite{Johnson2013}. 

\begin{coro} With reference to the parameterization of Eq. (\ref{eq:Werner}), we have that: \\
(i) Three Werner states with parameters $(\eta_{AB}, \eta_{AC}, \eta_{BC})$ are joinable with respect to 
the $(W_{ABC}\geq_{\text{st}}0, \{\trn{A},\trn{B},\trn{C}\})$ scenario if and only if 
\begin{eqnarray*}
\left \{ \begin{array}{l}\frac{1}{2}(1-\eta_{AB}-\eta_{AC}-\eta_{BC})\geq 
|\eta_{AB}+\omega\eta_{AC}+\omega^2 \eta_{BC}|, \quad \omega\equiv e^{i 2\pi/3}, \\
\eta_{AB}+\eta_{AC}+\eta_{BC}\geq -1, \end{array}\right. 
\end{eqnarray*}
for $d=2$, while for $d\geq3$ they need only satisfy
\begin{eqnarray*}
\frac{3}{2d(d\mp1)} \Big(1 \pm (d \mp 1) (\eta_{AB}+\eta_{AC}+\eta_{BC}) \Big) \geq 
|\eta_{AB}+\omega\eta_{AC}+\omega^2\eta_{BC}|.
\end{eqnarray*}
\noindent 
(ii) Three $U\otimes U$-invariant operators with parameters $(\eta_{AB}, \eta_{AC}, \eta_{BC})$ 
are channel-joinable with respect to the $(W_{A|BC}\geq_{\text{ch}}0, \{\trn{A},\trn{B},\trn{C}\})$ scenario if and only if 
\begin{eqnarray*}
\hspace*{-3mm}
\frac{1}{d-1}+\eta_{AB}+\eta_{AC}-\eta_{BC}
\geq \bigg|\frac{2}{d-1}+d\eta_{BC}+\sqrt{\frac{2d}{d-1}}(e^{i\theta}\eta_{AB}+e^{-i\theta}\eta_{AC})\bigg|,\\
\hspace*{-3mm}e^{i\theta} \equiv \sqrt{(d-1)/2d}\pm i \sqrt{(d+1)/2d}.
\end{eqnarray*}
The channel-joinability limitations in the other two scenarios $B|AC$ and $C|AB$ may be obtained by 
permuting the $\eta$s accordingly.
\end{coro}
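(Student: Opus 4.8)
The plan is to work inside the commutant of the collective action $U\otimes U\otimes U$, which by Schur--Weyl duality is spanned by the six permutation operators $\{V_\pi:\pi\in S_3\}$: the identity, the three transpositions $V_{AB},V_{AC},V_{BC}$, and the two $3$-cycles. I would write a candidate joining operator as $w=\sum_{\pi\in S_3}c_\pi V_\pi$, so that state-joinability asks whether the $c_\pi$ can be chosen to make $w$ state-positive while reproducing the prescribed two-body data. Applying $\trn{A},\trn{B},\trn{C}$ and recording the reduction of each generator --- a transposition among the surviving pair contracts to $d$ times the corresponding swap, a transposition crossing the traced-out site contracts to the identity, and each $3$-cycle contracts to the swap on the surviving pair --- shows that the marginal swap-coefficients, i.e. $(\eta_{AB},\eta_{AC},\eta_{BC})$, fix the transposition coefficients only together with the symmetric combination of the two $3$-cycle coefficients. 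Hence, after imposing the marginals and trace-normalization, a low-dimensional family of ``internal'' parameters survives --- most simply the antisymmetric $3$-cycle combination, which the two-body traces never see --- and state-joinability becomes the question of whether these internal parameters can be chosen to place $w$ in the state-positive cone.

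The central step is to convert state-positivity of $w$ into positivity on each isotypic block of the Schur--Weyl decomposition. Since $\sum_\pi c_\pi V_\pi$ acts on the standard (two-dimensional) isotypic component as $\big(\sum_\pi c_\pi\,\rho_{\mathrm{std}}(\pi)\big)\otimes\identity$, independently of the multiplicity, one obtains there a single $2\times 2$ Hermitian positivity condition, together with one scalar condition from each one-dimensional irrep present (symmetric and, for $d\geq3$, antisymmetric). Diagonalizing the cyclic subgroup by the discrete Fourier transform is exactly what produces $|\eta_{AB}+\omega\eta_{AC}+\omega^2\eta_{BC}|$ with $\omega=e^{i2\pi/3}$: the transpositions map to the off-diagonal of the $2\times2$ block, whose entry is this Fourier sum fixed by the marginals, while the identity and the two $3$-cycles map to the diagonal, which therefore remains tunable through the internal parameters.

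I expect the main obstacle to be the resulting projection: the joinable region is the shadow of the tripartite positive cone onto the marginal coordinates, so one must decide for which $(\eta_{AB},\eta_{AC},\eta_{BC})$ the internal parameters can be chosen to satisfy the $2\times2$ determinant condition (diagonal product $\geq$ squared off-diagonal modulus) and the scalar block conditions simultaneously. This is a small but sharp optimization, the boundary being attained by the choice of internal parameters that maximally enlarges the feasible set. The case split enters precisely here: for $d=2$ the antisymmetric subspace of three qubits is empty, the sign representation is absent, one scalar constraint drops, and the optimum yields the two conditions stated for $d=2$; for $d\geq3$ all irreps occur and the projection collapses to the single uniform bound, the $\pm$ encoding the symmetric and antisymmetric blocks through the prefactor $\frac{3}{2d(d\mp1)}\big(1\pm(d\mp1)(\eta_{AB}+\eta_{AC}+\eta_{BC})\big)$. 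For this tripartite positivity-boundary computation I would draw on Ref.~\cite{Johnson2013}.

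For part (ii) I would not repeat the analysis but invoke Corollary~\ref{thm:joinisomorphism} with $\phi=T_A$, the partial transpose on subsystem $A$. By Proposition~\ref{thm:PPT} this is a positivity-preserving bijection sending operators channel-positive with respect to $A|BC$ to state-positive operators, and its reduced actions are again partial transposes: $\trn{C}(w^{T_A})=(M_{AB})^{T_A}$ and $\trn{B}(w^{T_A})=(M_{AC})^{T_A}$, while $\trn{A}(w^{T_A})=\trn{A}(w)=M_{BC}$ is left state-positive and unchanged, so the compatibility $\phi_k\circ\trn{\ell_k}=\trn{\ell_k}\circ\phi$ required by the corollary holds. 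Thus the channel-joinable region in the $A|BC$ scenario is the image, under $T_A$, of a state-joining region for the related (mixed $\bar U\otimes U\otimes U$) symmetry class, analyzed by the same isotypic method; tracking how $T_A$ rotates the off-diagonal Fourier combination (it fixes $V_{BC}$ but carries the two transpositions crossing the bipartition into maximally-entangled-state projectors) replaces $\omega$ by the phase $e^{i\theta}=\sqrt{(d-1)/2d}\pm i\sqrt{(d+1)/2d}$ and yields the stated inequality, with the remaining two bipartition scenarios following by permuting $A,B,C$.
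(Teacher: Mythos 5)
Your proposal is correct and takes essentially the same route as the paper: part (i) rests on the $S_3$ commutant/isotypic-block analysis behind Theorem 3 of Ref.~\cite{Johnson2013} (which the paper simply cites and reparameterizes via $\eta_\ell=(d/(d^2-1))(\Psi_\ell^- -1/2)$), and your sketch of that computation is sound --- in particular, the off-diagonal entry of the standard $2\times2$ block really is fixed by the marginals alone, since the symmetric $3$-cycle contribution drops out by $1+\omega+\omega^2=0$, while the missing sign irrep at $d=2$ accounts for the case split. For part (ii) you invoke exactly the paper's argument: Corollary~\ref{thm:joinisomorphism} with $\phi=T_A$, transporting the $U\otimes U\otimes U$-invariant channel-joining problem to the $U^*\otimes U\otimes U$-invariant state-joining problem solved in Eqs.~(20)--(21) of \cite{Johnson2013}.
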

\begin{proof} Result (i) corresponds to Theorem 3 in \cite{Johnson2013}, 
re-expressed in terms of the parametrization of Eq. (\ref{eq:Werner}) (with reference to the notation of 
Eqs. (15)--(17) in \cite{Johnson2013}, one has $\eta_\ell = (d/(d^2-1))(\Psi_\ell^- -1/2)$, $\ell=AB, AC, BC$).
 
In order to establish (ii), note that ${\cal J}$ may be used to translate any $U^{*}\otimes U$-invariant state-positive joinability problem into a $U\otimes U$-invariant channel-positive joinability problem, drawing on Corollary  \ref{thm:joinisomorphism}. 
Explicitly, under ${\cal J}$ (partial transpose in the case of operators), the $U^{*}\otimes U\otimes U$-invariant state-positive operators $W_{A^*BC}$ are in one-to-one correspondence with the $U\otimes U\otimes U$-invariant channel-positive operators $W_{A|BC}$. Hence, by the joinability isomorphism induced by the partial transpose, the solution to a joinability problem of the scenario $(W_{A^{*}BC},\{\trn{A},\trn{B},\trn{C}\})$ gives a solution to a corresponding joinability problem of $(W_{A|BC},\{\trn{A},\trn{B},\trn{C}\})$.
Thus, to obtain the depolarizing channel-joinability boundaries, we simply translate the isotropic state parameters of Eqs. (20)-(21) in \cite{Johnson2013} into $\eta$ parameters.
\end{proof}

\begin{figure}[ht]
\centering
\subfigure[]{
\hspace*{20mm}\includegraphics[width=.4\columnwidth,viewport=110 520 350 720,clip]{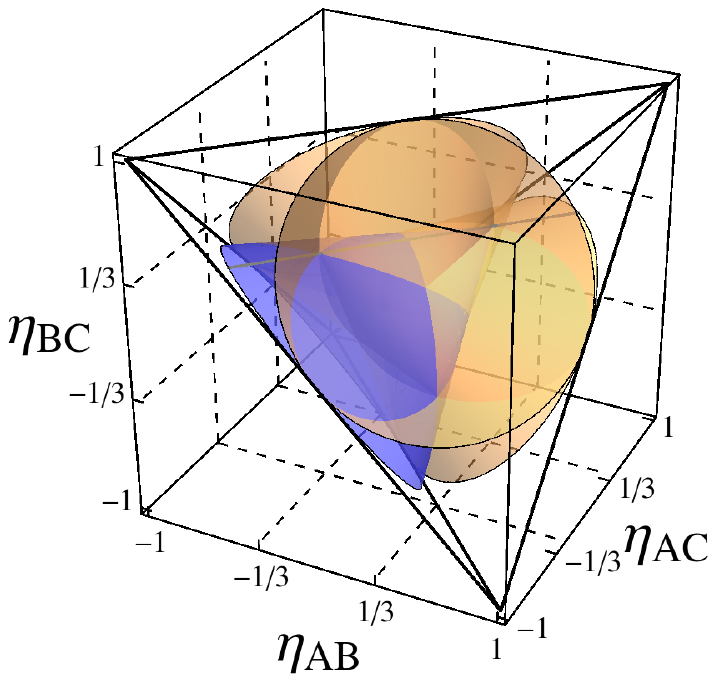}}
\subfigure[]{
\includegraphics[width=.4\columnwidth,viewport=110 500 330 720,clip]{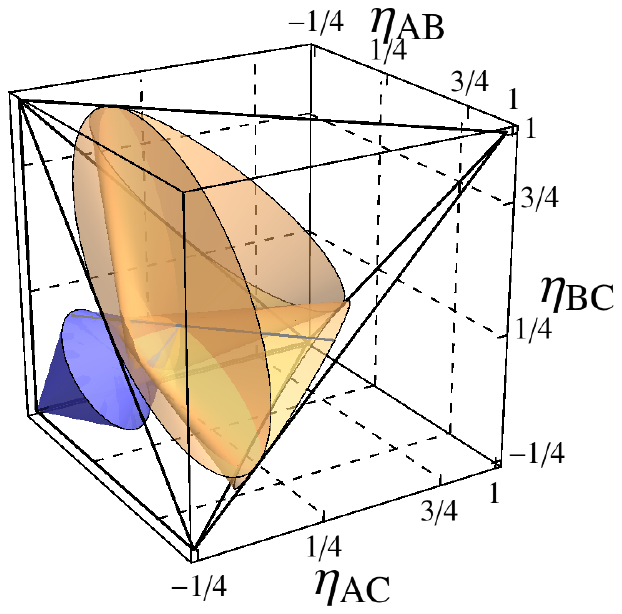}}
\vspace*{-2mm}
\caption{Joinability of operators on the $\beta=0$ line for (a) $d=2$ and (b) $d=5$ (each from a different perspective). State-positivity, along with channel-positivity with respect to each of the three bipartitions, obtains the four cones depicted here. The joinability limitations for classical probability distributions are given by (a) the tetrahedron with black edges and (b) the union of the two tetrahedra with black edges.}
\label{fig:csjoining}
\end{figure}

The joinability limitations of all four scenarios are depicted in Fig. \ref{fig:csjoining}. As stressed in \cite{Johnson2013}, the quantum joinability limitations must adhere to the analogous classical joinability limitations (seen as the tetrahedra in Fig. \ref{fig:csjoining}). 
In the qubit case, we find it intriguing that the inclusion of the quantum channel-joinability limitations allows us to regain the {\em tetrahedral symmetry} imposed by the classical limitations; whereas each scenario on its own expresses a continuous rotational symmetry that is {\em not} reflected classically. In other words, if we consider the joinability scenario defined by 
$$(\text{span}\{W_{ABC}, W_{A|BC}, W_{B|AC}, W_{C|AB}\},\{\trn{A},\trn{B},\trn{C}\}),$$ 
the joinable bipartite operators respect the tetrahedral symmetry suggested by the classical joinability bounds. This amounts to asking the question: what trios of bipartite correlations -- as derivable from {\em either} quantum states or channels, {\em or} from probabilistic combinations of the two -- may be obtained from the measurements on three systems? Though the result expresses the tetrahedral symmetry of the classical joinability limitations, these classical joinability limitations {\em do not suffice} to enforce the stricter quantum joinability limitations, as manifest in the fact that the corners of the classical joinability tetrahedron are {\em not} reached by the quantum boundaries. We diagnose such limitations as strictly quantum features that do not have classical analogues -- as we will discuss later in this work.

\subsection{Joinability limitations from local-positivity}
\label{sec:lp}

We now explore how local-positive joinability (a strictly \emph{weaker} restriction, as noted) 
relates to the state/channel-joinability limitations above, as well as to the underlying classical limitations. 
As of yet, we only know that the local-positive limitations will lie between the classical and the quantum boundaries.
Since obtaining a simple analytical characterization for arbitrary dimension $d$ appears challenging in the 
local-positive setting, and useful insight may already be gained in the lowest-dimensional (qubit) setting, we focus 
on $d=2$ in this section. Our main result is the following:

\begin{thm}
With reference to Eq. (\ref{eq:Werner}), 
three qubit Werner operators (constrained by local-positivity) with parameters $(\eta_{AB}, \eta_{AC}, \eta_{BC})$ are joinable by 
a local-positive tripartite Werner operator $w$ if an only if the following conditions hold:
\begin{eqnarray*}
1+\eta_{AB}+\eta_{AC}+\eta_{BC}\geq & 0, \qquad 1+ \eta_{AB}-\eta_{AC}-\eta_{BC}\geq & 0 , \\
1-\eta_{AB}+\eta_{AC}-\eta_{BC}\geq & 0, \qquad 1-\eta_{AB}-\eta_{AC}+\eta_{BC}\geq  & 0, 
\end{eqnarray*}
and
\begin{eqnarray*}
\hspace*{-1cm}
2\eta_{AB}\eta_{AC}\eta_{BC}-\eta^2_{AB}\eta_{AC}^2-\eta^2_{AB}\eta_{BC}^2-\eta^2_{AC}\eta_{BC}^2 \geq 0.
\end{eqnarray*}
\label{thm:locpos}
\end{thm}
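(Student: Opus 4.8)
The plan is to parameterize the collective-invariant joining operator, convert local-positivity into a single inequality over Bloch vectors, and then read off the stated conditions from a constrained minimization. By Schur--Weyl duality, every $U\otimes U\otimes U$-invariant operator on $(\complex^2)^{\otimes 3}$ lies in the span of the six permutation operators $V_\pi$, $\pi\in S_3$; since the totally antisymmetric subspace of three qubits is empty, the antisymmetrizer vanishes, yielding the relation $\identity-(V_{12}+V_{13}+V_{23})+(V_{123}+V_{132})=0$. Using it to eliminate the symmetric three-cycle combination, a Hermitian joining operator may be written as $w=\alpha\,\identity+\gamma_{AB}V_{12}+\gamma_{AC}V_{13}+\gamma_{BC}V_{23}+i\nu(V_{123}-V_{132})$ with real coefficients. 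First I would compute the three partial traces; since $V_{123}$ and $V_{132}$ have identical reductions over any single subsystem, the chiral term drops out of every reduction, so matching $\trn{C}(w)=\rho(\eta_{AB})$ and its cyclic images forces $\gamma_{AB}=\eta_{AB}/4$ (cyclically) and $\alpha=(1-\eta_{AB}-\eta_{AC}-\eta_{BC})/8$, while $\nu$ stays \emph{free}.

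\emph{Reducing local-positivity.} Local-positivity requires $\langle abc|w|abc\rangle\geq 0$ for all single-qubit pure states, which I would evaluate through Bloch vectors $\hat a,\hat b,\hat c$ using $|\langle a|b\rangle|^2=(1+\hat a\cdot\hat b)/2$ and the Bargmann invariant $\langle a|b\rangle\langle b|c\rangle\langle c|a\rangle=\frac{1}{4}(1+\hat a\cdot\hat b+\hat b\cdot\hat c+\hat c\cdot\hat a+i\,\hat a\cdot(\hat b\times\hat c))$. The expectation then collapses to $\frac{1}{8}\big(1+\eta_{AB}\,\hat a\cdot\hat b+\eta_{AC}\,\hat a\cdot\hat c+\eta_{BC}\,\hat b\cdot\hat c+4\nu\,\hat a\cdot(\hat b\times\hat c)\big)$. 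Because reflecting a configuration flips $\hat a\cdot(\hat b\times\hat c)$ while preserving all pairwise overlaps, the chiral term occurs with both signs and cannot help; the optimal choice is $\nu=0$. Joinability thus reduces to the requirement that $1+\eta_{AB}x+\eta_{AC}y+\eta_{BC}z\geq 0$ for every Gram triple $(x,y,z)=(\hat a\cdot\hat b,\hat a\cdot\hat c,\hat b\cdot\hat c)$, i.e. over the elliptope of $3\times 3$ correlation matrices.

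\emph{Extracting the inequalities.} Minimizing this linear functional over the elliptope, the rank-one extreme points $(x,y,z)=(s_as_b,s_as_c,s_bs_c)$ with $s_i=\pm1$ give precisely the four linear inequalities. The rank-two (coplanar) minimizer I would handle by completing the square: choosing $p,q,r$ with $pq=\eta_{AB}$, $pr=\eta_{AC}$, $qr=\eta_{BC}$, one has $\eta_{AB}\,\hat a\cdot\hat b+\eta_{AC}\,\hat a\cdot\hat c+\eta_{BC}\,\hat b\cdot\hat c=\frac{1}{2}|p\hat a+q\hat b+r\hat c|^2-\frac{1}{2}(p^2+q^2+r^2)$, whose minimum $-\frac{1}{2}(p^2+q^2+r^2)$ is attained exactly when $p\hat a+q\hat b+r\hat c=0$. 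Since $p^2+q^2+r^2=(\eta_{AB}^2\eta_{AC}^2+\eta_{AB}^2\eta_{BC}^2+\eta_{AC}^2\eta_{BC}^2)/(\eta_{AB}\eta_{AC}\eta_{BC})$, requiring this minimum to be at least $-1$ is exactly the cubic condition $2\eta_{AB}\eta_{AC}\eta_{BC}-\eta_{AB}^2\eta_{AC}^2-\eta_{AB}^2\eta_{BC}^2-\eta_{AC}^2\eta_{BC}^2\geq 0$.

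\emph{The main obstacle} is the case analysis deciding which extreme point is the global minimizer. The completing-the-square bound is meaningful only when $p,q,r$ are real, i.e. $\eta_{AB}\eta_{AC}\eta_{BC}>0$, and is \emph{tight} only when $|p|,|q|,|r|$ satisfy the triangle inequalities, so that $p\hat a+q\hat b+r\hat c=0$ is geometrically realizable; only in this \emph{frustrated} regime is the cubic the operative constraint. When the triangle inequalities fail, the minimum slides back to a rank-one corner and the corresponding linear inequality (e.g. $1-\eta_{AB}-\eta_{AC}+\eta_{BC}\geq0$) takes over instead. The delicate part of the argument will be verifying that the cubic and linear surfaces meet consistently along these transition loci, so that the listed conjunction of inequalities carves out the joinable region exactly and does not spuriously exclude points (such as those with an odd number of negative $\eta$'s) where the frustrated minimizer is simply absent.
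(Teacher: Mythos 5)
Your proposal is correct in substance and follows the same skeleton as the paper's own proof: the same six-permutation parameterization using the qubit identity $\identity-(V_{AB}+V_{AC}+V_{BC})+V_{ABC}+V_{CBA}=0$, the same observation that the chiral term is invisible to all three partial traces and can only hurt local positivity (the paper argues this via the sign of the $\sin\phi$ term at the minimizing angles; your reflection argument on $\hat a\cdot(\hat b\times\hat c)$ is an equivalent, arguably cleaner, symmetry statement), and the same dichotomy of minimizers producing the four linear inequalities and the cubic. Where you genuinely differ is the computational core: the paper parameterizes coplanar product states by two angles, sets $\partial F/\partial\theta=\partial F/\partial\Omega=0$, and extracts $\cos\theta$, $\cos\Omega$, $\cos(\theta+\Omega)$ in terms of $b,c,d$ through a chain of trigonometric identities before substituting back; you instead minimize the linear functional over the elliptope of $3\times 3$ correlation matrices and obtain the cubic in one line by completing the square with $pq=\eta_{AB}$, $pr=\eta_{AC}$, $qr=\eta_{BC}$. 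Your route buys transparency about \emph{which} extreme point minimizes (rank-one corners vs.\ the frustrated rank-two configuration $p\hat a+q\hat b+r\hat c=0$), which the angle calculus obscures.

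The ``main obstacle'' you flag at the end is real, and you should know it is not a defect of your route alone: the paper's proof is silent on it, simply collecting both families of stationarity conditions and asserting the conjunction. Your worry is on target. The frustrated minimizer exists only when $p,q,r$ are real ($\eta_{AB}\eta_{AC}\eta_{BC}>0$) and $|p|,|q|,|r|$ satisfy the triangle inequalities; outside that regime the minimum sits at a corner and only the linear inequalities bind. Read literally as a conjunction, the stated cubic would then exclude joinable points with an odd number of negative $\eta$'s: for instance at $(\eta_{AB},\eta_{AC},\eta_{BC})=(0.3,\,0.3,\,-0.05)$ the cubic fails, yet the stationarity equations admit only corner solutions, the minimal expectation is $\frac{1}{8}(1-0.3-0.3-0.05)>0$, and the trio is joinable --- consistent with the convex hull of the Roman surface of Fig.~\ref{fig:locjoininga}, which contains this point (e.g., as a convex combination of the four lobe tips $(\pm\frac{2}{3},\pm\frac{2}{3},\pm\frac{2}{3})$ with positive product). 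So the case analysis you defer --- the cubic operative only in the frustrated regime, the linear faces operative elsewhere, with the two surfaces matching where the triangle condition degenerates --- is not optional bookkeeping: carrying it out is exactly what is needed to state the boundary precisely, and doing so would sharpen the published argument rather than merely reproduce it.
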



\begin{figure}[t]
\centering
\includegraphics[width=.4\columnwidth,viewport=110 490 350 730,clip]{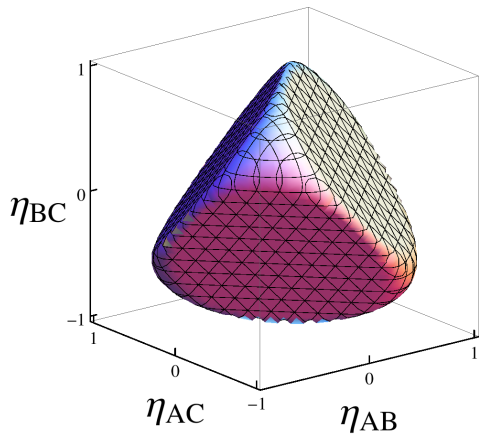}
\vspace*{-2mm}
\caption{Boundary of local-positive Werner operators which are joinable via local-positive operators,
as described by the Roman surface, see Theorem \ref{thm:locpos}.}
\label{fig:locjoininga}
\end{figure}

The proof is rather lengthy and deferred to a separate Appendix. The resulting boundary is depicted in Fig. \ref{fig:locjoininga}; the shape and its determining equation is recognized as the convex hull of the \emph{Roman surface} (aka {\em Steiner surface})
\cite{Bengtsson2006,Bengtsson2011}. Comparing with Fig. \ref{fig:csjoining}(a), we see that, still, the quantum joinability limitation arising from from local-positivity is {\em stricter} than the corresponding classical one. However, it is closer to the classical limitations than the state/channel-positive limitations obtained in the previous section for $d=2$. To shed light on the cause of the quantum boundary here, we can explicitly construct a product-state projector, whose probability would be negative {\em if} joinability outside of this shape were allowed. The family of joining states $w$ that we need to consider (see Appendix) may be parameterized in terms of the bipartite reduced state Werner parameters as
$$
w(\eta_{AB},\eta_{AC},\eta_{BC})=\frac{1}{8}\identity+\frac{\eta_{AB}}{4}(V_{AB}-\identity/2)+\frac{\eta_{AC}}{4}(V_{AC}-\identity/2)+\frac{\eta_{BC}}{4}(V_{BC}-\identity/2).
$$
Consider the following 
state on $A$-$B$-$C$:
\begin{equation}
 \ket{\psi}=\left[
\begin{array}{c}
1 \\ 0
\end{array} \right]\otimes \left[ \begin{array}{c}
\cos{2\pi/3} \\
\sin{2\pi/3} \end{array} \right]\otimes \left[
\begin{array}{c}
\cos{4\pi/3} \\
\sin{4\pi/3} \end{array} \right],
\end{equation}
which corresponds to the pure product state with the local Bloch vectors as anti-parallel with one another as possible. Computing its expectation with respect to 
$w(\eta_{AB}=\eta_{AC}=\eta_{BC}\equiv \eta)$, the largest value of $\eta$ that admits a non-negative value is 
$\eta=2/3$. 
Hence, local-positivity limits the simultaneous joining of these Werner operators to a maximum of $\eta=2/3$. The operational interpretation of this result deserves attention. Consider a local projective measurement made on each of three qubit systems. Furthermore, consider the three systems to have a collective unitary symmetry, in the sense that there are no preferred local bases. In our general picture, where local positivity is considered, the systems need not be three distinct systems -- they may also be the same system at two different points in time.
Local positivity enforces the rule that ``all probabilities arising from such measurements must be non-negative''. In the example above (i.e. $\eta_{AB}=\eta_{AC}=\eta_{BC}$), this rule implies that the three equal correlations (as measured by the $\eta$s) can never exceed 2/3.

As this example and Fig. \ref{fig:locjoininga} show, local-positivity enforces joinability limitations {\em more strict} than those of classical probability theory. 
Notwithstanding, 
the limitations arising from local-positivity reflect the same symmetry as the classical limitations do, namely, symmetry with respect to individually inverting two axes. The state-joining and channel-joining scenarios reflected a preference towards the negative axis (anticorrelation) and the positive axis (correlation) of the $\eta$s, respectively.

Before concluding this section, we connect the above discussion to the relationship between local-positivity and separability. As mentioned earlier, the cone of local positive operators and the cone of separable operators are dual to one another. The operator subspace we are dealing with is spanned by the orthonormal operators $\frac{1}{\sqrt{8}}\identity$, $\frac{1}{\sqrt{6}}(V_{AB}-\identity/2)$, $\frac{1}{\sqrt{6}}(V_{AC}-\identity/2)$, and $\frac{1}{\sqrt{6}}(V_{BC}-\identity/2)$ with coordinates $\frac{1}{\sqrt{8}}$, $\sqrt{\frac{3}{8}}\eta_{AB}$, $\sqrt{\frac{3}{8}}\eta_{AC}$, and $\sqrt{\frac{3}{8}}\eta_{BC}$, respectively.
In Theorem \ref{thm:locpos}, we determined the algebraic surface bounding the local positive operators; hence, the dual to this surface will bound the separable operators within this space. The dual to the Roman surface is known as the {\em Cayley's cubic surface} \cite{Henrion2011}, which, for a given scale parameter $w$ is characterized by
$$
\left| 
\begin{array}{ccc}
w & x & y \\
x & w & z \\
y & z & w
\end{array} \right| = 0.$$
We first set $x=\sqrt{\frac{3}{8}}\eta_{AB}$, $y=\sqrt{\frac{3}{8}}\eta_{AC}$, and $z=\sqrt{\frac{3}{8}}\eta_{BC}$. Then we must set $w$ so that the Cayley surface delimits the separable states. For each extremal separable state in our space, there is a corresponding local-positive operator acting as an entanglement witness; a state is separable if the inner product with its entanglement witness is nonnegative.

Consider the extremal local-positive operator $\eta_AB=\eta_AC=\eta_BC=2/3$ that we made use of previously. This operator will act as an entanglement witness for another operator with $\eta_AB=\eta_AC=\eta_BC=\sigma$. We obtain $\sigma$ by solving 
$$
\left[ 
\begin{array}{c}
\frac{1}{\sqrt{8}} \\
\sqrt{\frac{3}{8}}\frac{2}{3} \\
\sqrt{\frac{3}{8}}\frac{2}{3} \\
\sqrt{\frac{3}{8}}\frac{2}{3} 
\end{array} \right]
\cdot\left[ 
\begin{array}{c}
\frac{1}{\sqrt{8}} \\
\sqrt{\frac{3}{8}}\sigma \\
\sqrt{\frac{3}{8}}\sigma\\
\sqrt{\frac{3}{8}}\sigma 
\end{array} \right]=0,
$$
to arrive at $\sigma=-\frac{1}{6}$. With this, the only value of $w$ allowing the Cayley surface to be solved by $\sigma=-\frac{1}{6}$ is $w=\frac{1}{\sqrt{24}}$. Setting the scaling value and evaluating the determinant, we find that the separable states are bound by the surface
\begin{eqnarray}
1+54\eta_{AB}\eta_{AC}\eta_{BC}-9(\eta_{AB}+\eta_{AC}+\eta_{BC})^2 \nonumber \\ 
+18(\eta_{AB}\eta_{AC}+\eta_{AB}\eta_{BC}+\eta_{AC}\eta_{BC})\geq 0.
\end{eqnarray}
This inequality may also be obtained using Theorem 1 in \cite{Eggeling2001}. The shape of the separable states is depicted in Figure \ref{fig:locposandseparable}. 
Several remarks may be made. First, the set of separable states exhibits the tetrahedral symmetry shared by the classical joinability boundary and the local-positive joinability boundary. Thus, among the various boundaries we have considered in this three dimensional Euclidean space, the state- and channel-positive boundaries are the only ones not obeying tetrahedral symmetry. However, both the convex hull \emph{and} the intersection of the state- and channel-positive cones bound regions which recover this symmetry. It is a curious observation that the convex hull of these cones is ``nearly'' the local-positive region, while the intersection is ``nearly'' the set of separable states. Earlier we found, in the two-qudit case, that local-positivity coincides with the union of the state- and channel-positive regions, as well as that the separable region was their intersection. Here we consider the analog for three qubits. The result is that i) the convex hull of state- and channel- positive operators is strictly 
contained in the set of local-positive operators; and ii) the intersection of the state- and channel-positive operators is strictly contained in the set of separable states.

We may further interpret the latter result in terms of PPT considerations. The operators which result from a homocorrelation-mapped channel necessarily have PPT. Corollary 1 in Ref. \cite{Eggeling2001} states that the PPT and bi-separable Werner operators coincide. Thus, any state-positive operator which is also a homocorrelation mapped channel is necessarily bi-separable. Hence, the intersection of the four cones will be the set of states which are bi-separable with respect to any of the three partitions. This set is clearly contained in the set of tri-separable states.
These observations illuminate the relationships among entanglement, quantum states, and quantum channels. Specifically, the homocorrelation map allows us to place quantum channels in the same arena as quantum states, and hence to directly compare and contrast them. Finding that the tri-separable operators are a proper subset of the bi-separable ones, we wonder what features these strictly bi-separable operators possess, and what does 
bi-separability imply for the states \emph{or} channels 
supporting 
such correlations.

\begin{figure}[t]
\centering

\centering
\subfigure[]{
\hspace*{15mm}
\includegraphics[width=.36\columnwidth,viewport=110 500 350 720,clip]{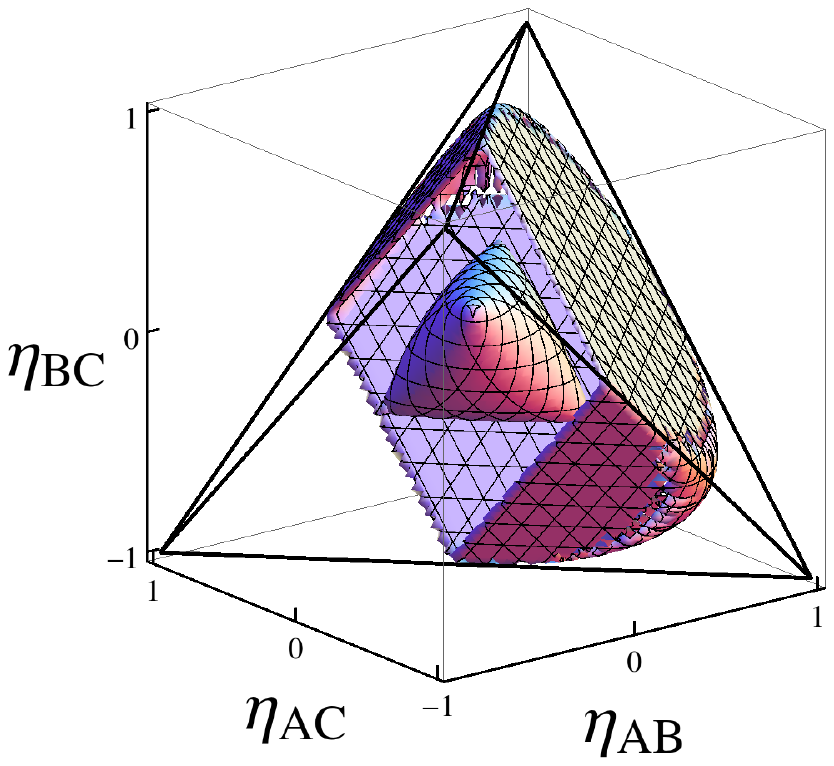}
}
\subfigure[]{
\includegraphics[width=.36\columnwidth,viewport=110 500 350 720,clip]{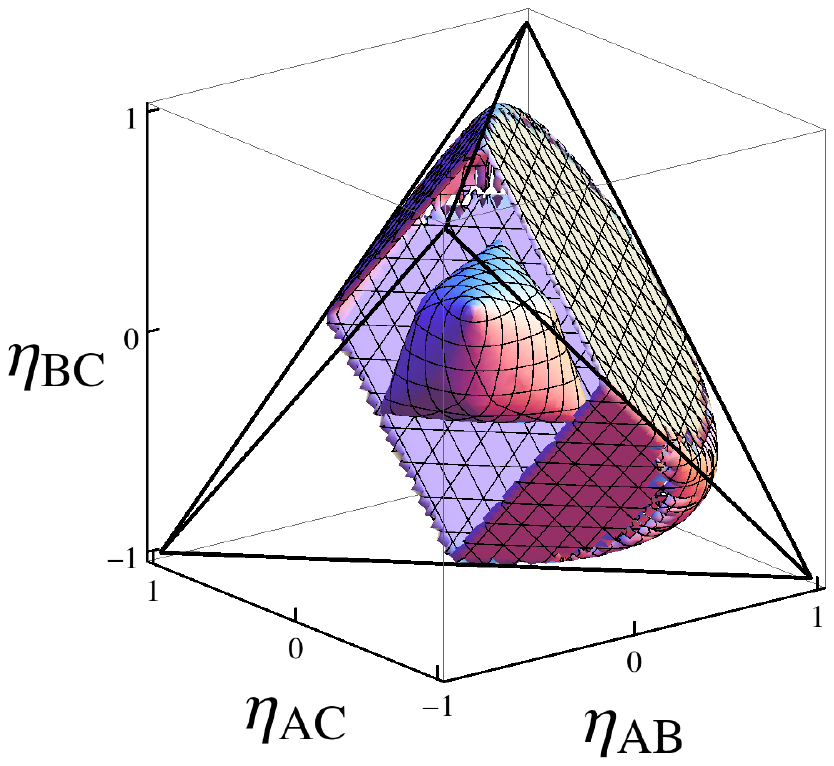}
}
\vspace*{-2mm}
\caption{(a) Set of separable operators within the set of local-positive operators. (b) 
Intersection of the state- and channel-positive cones within the set of local-positive operators. While the separable operators are a subset of the intersection set, the two objects coincide (only) at their vertices. In panel (a) the closest point in the separable set to the $(-1, -1, -1)$ corner of the figure is $(-1/6,-1/6,-1/6)$, whereas in panel (b), the closest point in the intersection set is $(-1/5, -1/5, -1/5)$.}
\label{fig:locposandseparable}
\end{figure}

\section{Agreement bounds for quantum states and channels}
\label{sec:agreement}

In what remains, we illustrate some crucial differences between channel- and state-positive operators. These differences inform the nature of their respective joinability limitations. In order to directly compare states to channels we restrict our considerations here to operators in $\mathcal{B}(\hilbert_d\otimes \hilbert_d)$. Qualitatively, state-positive operators are restricted in the degree to which they can support \emph{agreeing} outcomes, whereas channel-positive operators are restricted in the degree to which they can support \emph{disagreeing} outcomes. We define the degree of agreement to be the likelihood of a certain POVM element. Specifically, consider a local projective measurement $M=\{\ketbra{ij}\}$. We can coarse-grain this into a two-element projective measurement with the bipartition into ``agreeing'' outcomes, $E_A=\sum_i \ketbra{ii}$, and ``disagreeing'' outcomes, $E_D=\sum_{i\neq j}\ketbra{ij}$, respectively. 
Lastly, so that these outcomes are basis-independent, we can ``twirl'' $E_A$ and $E_D$ as follows:
\begin{eqnarray*}
\overline{E}_A=\int d\mu(U) U\otimes U \Big( \sum_i \ketbra{ii} \Big) U^{\dagger} \otimes U^{\dagger}\\
\overline{E}_D=\int d\mu(U) U\otimes U \Big( \sum_{i\neq j} \ketbra{ij} \Big) U^{\dagger} \otimes U^{\dagger}, 
\end{eqnarray*}
where $d\mu(U)$ denotes integration with respect to the invariant (Haar) measure. 
It is simple to see that these two operators yield a resolution of identity and hence form a POVM. 
We can compute these two operators explicitly as follows. By the invariance of the Haar measure, we can rewrite 
$\overline{E}_A$ as
\begin{eqnarray*}
 \overline{E}_A=d\int d\mu(\psi) \ketbra{\psi}^{\otimes 2},
\end{eqnarray*}
for which the above integral is proportional to the projector onto (or identity operator $\identity^+_2$ in) the totally symmetric subspace $\hilbert_+^{\otimes 2}\subset \hilbert^{\otimes 2}$ \cite{Chiribella2006}. Explicitly, we can write 
\begin{eqnarray}
\overline{E}_A &=& \frac{d}{d^+_2}\identity^+_2=\frac{d}{d^+_2}\frac{\identity+V}{2}, \quad 
 d_2^+ \equiv  \text{dim}(\hilbert^+_2)={2+d-1\choose 2}, 
\label{povm0}\\
\overline{E}_D & = & \identity - \overline{E}_A . 
\label{povm}
\end{eqnarray}
We define the {\em degree of agreement} to be the likelihood of $\overline{E}_A$ and, similarly, the {\em degree of disagreement} to be the likelihood of $\overline{E}_D$. Operationally, these values are the probability that, for a randomly chosen local projective measurement made collectively, the local outcomes will agree or disagree. 

We now proceed to show how quantum channels differ from quantum states in their allowed range of agreement likelihood. In the case of a bipartite operator $\rho\in\mathcal{B}(\hilbert\otimes\hilbert)$, we are familiar with computing this  agreement probability as $\tr{}({\overline{E}_A\rho})$. 
To carry out the same computation for a channel operator, the homocorrelation map becomes expedient. 
Given a quantum channel $\mathcal{M}:\mathcal{B}(\hilbert)\rightarrow\mathcal{B}(\hilbert)$, we wish to determine the probability the outcome of a randomly chosen projective measurement (made on the completely mixed state) will agree with the outcome of the same measurement after the application of $\mathcal{M}$. Assume the outcome was $\ket{i}$ from an orthogonal basis $\{\ket{i}\}$. Then the post-channel state is $\mathcal{M}(\ketbra{i})$, and the likelihood that the post-channel measurement will also be $\ket{i}$ is
$\bra{i}\mathcal{M}(\ketbra{i})\ket{i}$. 
Lastly, if we want to average this likelihood of agreement over all choices of basis we integrate,
\begin{eqnarray*}
\text{p(agree)}&=\int d\mu(U)\tr{}\Big( {\mathcal{M}(U\ketbra{i}U^{\dagger})U\ketbra{i}U^{\dagger}} \Big)\\
&=\int d\mu(\psi)\tr{}\Big( {\mathcal{M}(\ketbra{\psi})\ketbra{\psi}} \Big).
\end{eqnarray*}
If we wish to find the bounds on this value, the above form does not make transparent the fact that we are performing an optimization problem in a convex cone. But, recalling the namesake property of the homocorrelation map, Eq. (\ref{eq:homocorrprop}), the above expression may be rewritten as
\begin{eqnarray*}
\text{p(agree)}=\tr \left[ \mathcal{H}(\mathcal{M})d\int d\mu(\psi)\ketbra{\psi}\otimes\ketbra{\psi} \right] 
=\tr[ \mathcal{H}(\mathcal{M})\overline{E}_A].
\end{eqnarray*}
Accordingly, the likelihood of agreement is calculated for channel operators in the homocorrelation representation {\em in just the same way} as it is for bipartite density operators. 
With the stage set, the desired bounds are described in the following theorem:
\begin{thm}
Let $w$ be an operator in $\mathcal{B}(\hilbert_d\otimes\hilbert_d)$,
and consider a POVM with operation elements as in Eq.
(\ref{povm}). Then the degree of agreement for $w\geq_{\text{st}} 0$ is bounded by 
\begin{equation}\label{eq:b1}
0\leq \trn{}{(w\overline{E}_A)}\leq \frac{2}{2+d-1},
\end{equation}
while the degree of agreement for $w\geq_{\text{ch}}0$ is bounded by 
\begin{equation}\label{eq:b2}
\frac{1}{2+d-1}\leq \trn{}({w\overline{E}_A})\leq 1.
\end{equation}
\label{thm:bipartiteagree}
\end{thm}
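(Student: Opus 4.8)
The plan is to normalize the operator and reduce both bounds to a single-parameter optimization, then treat the state-positive and channel-positive cases in parallel using the eigenstructure established in Proposition \ref{prop1}. First I would observe that $\overline{E}_A = \frac{d}{d_2^+}\frac{\identity+V}{2}$ is, up to the scalar $d/d_2^+$, exactly the projector onto the symmetric subspace, and that both $\identity$ and $V$ (hence $\overline{E}_A$) are $U\otimes U$-invariant. Therefore in computing $\trn{}(w\overline{E}_A)$ only the $U\otimes U$-invariant part of $w$ contributes: by twirling $w$ over the collective unitary group without changing the inner product with $\overline{E}_A$, I may assume $w$ is itself $U\otimes U$-invariant, i.e. of the Werner form $\rho(\eta)=(1-\eta)\identity/d^2 + \eta V/d$ of Eq. (\ref{eq:Werner}), with the normalization $\trn{}(w)=1$ fixed by the degree-of-agreement interpretation as a probability. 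This collapses the whole problem to optimizing the scalar function $\eta \mapsto \trn{}(\rho(\eta)\overline{E}_A)$ over the allowed range of $\eta$.

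Next I would carry out the one-dimensional computation $\trn{}(\rho(\eta)\overline{E}_A)$ explicitly. Using $\trn{}(V)=d$, $\trn{}(V^2)=\trn{}(\identity)=d^2$, and $\trn{}(\identity)=d^2$, together with $\overline{E}_A=\frac{d}{d_2^+}\frac{\identity+V}{2}$, this reduces to elementary traces and yields an expression affine in $\eta$, of the form $a+b\eta$ for explicit constants $a,b>0$ depending on $d$ through $d_2^+={d+1\choose 2}=d(d+1)/2$. Since the objective is affine (in fact monotincreasing) in $\eta$, its extrema over any interval are attained at the endpoints. Thus the bounds follow immediately once the endpoints of the relevant $\eta$-range are inserted: for the state-positive cone, Proposition \ref{prop1} (and the surrounding discussion) gives $-\frac{1}{d-1}\leq\eta\leq\frac{1}{d+1}$, while for the channel-positive cone it gives $-\frac{1}{d^2-1}\leq\eta\leq 1$. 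Substituting $\eta=\frac{1}{d+1}$ should produce the upper state bound $\frac{2}{d+1}=\frac{2}{2+d-1}$ and $\eta=-\frac{1}{d-1}$ the value $0$; substituting $\eta=1$ should give the channel upper bound $1$ and $\eta=-\frac{1}{d^2-1}$ the lower value $\frac{1}{2+d-1}$.

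The one point requiring care, which I expect to be the main obstacle, is the justification that the twirling reduction is \emph{lossless} at the level of both the objective and the constraint. Concretely, I must verify two things: that the twirled operator $\overline{w}=\int d\mu(U)\,(U\otimes U)\,w\,(U^\dagger\otimes U^\dagger)$ satisfies $\trn{}(\overline{w}\,\overline{E}_A)=\trn{}(w\,\overline{E}_A)$ (immediate from $U\otimes U$-invariance of $\overline{E}_A$ and cyclicity of the trace), and, crucially, that twirling preserves the relevant positivity \emph{cone}, so that $w\geq_{\mathrm{st}}0 \Rightarrow \overline{w}\geq_{\mathrm{st}}0$ and $w\geq_{\mathrm{ch}}0 \Rightarrow \overline{w}\geq_{\mathrm{ch}}0$. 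The state-positive case is standard (twirling by a unitary channel preserves positivity). The channel-positive case needs the observation that $(U\otimes U)(\cdot)(U^\dagger\otimes U^\dagger)$ commutes with partial transposition up to replacing $U$ by $U^*$ on the transposed factor, so that it maps channel-positive operators to channel-positive operators; equivalently, one invokes Corollary \ref{thm:joinisomorphism} with $\phi=T_A$ to transport the state-positive argument across. Once this cone-preservation is secured, the restriction to Werner operators is rigorous and the endpoint evaluation finishes the proof.

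Finally, I would double-check an edge issue: the endpoint $\eta=1$ lies outside the state-positive range but is a genuine channel-positive operator (the identity channel $\rho(1,0)$), so the two bounds in Eqs. (\ref{eq:b1})--(\ref{eq:b2}) are attained at \emph{different} endpoints, which is exactly the asymmetry the theorem is meant to expose — states cap the degree of agreement at $\frac{2}{2+d-1}$, whereas channels can reach full agreement $1$ but are floored at $\frac{1}{2+d-1}$, never fully disagreeing.
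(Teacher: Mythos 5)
Your proof is correct, but it takes a genuinely different route from the paper's. The paper works directly in operator space by exhibiting extremal operators: for the state-positive case it takes $w=\overline{E}_A/\trn{}{(\overline{E}_A)}$ for the maximum and any state supported on the antisymmetric subspace for the minimum; for the channel-positive case it partial-transposes both operators, reducing the problem to optimizing $\trn{}(w^{T_A}\overline{E}_A^{T_A})$ over density operators $w^{T_A}$ against $\overline{E}_A^{T_A}=\frac{d}{d_2^+}\frac{\identity+V^{T_A}}{2}$, whose extreme eigenvalues sit on $\ketbra{\Phi^+}$ and its orthogonal complement. Your twirling reduction instead collapses everything to a one-parameter endpoint analysis: since $\overline{E}_A$ is $U\otimes U$-invariant, $\trn{}(w\overline{E}_A)=\trn{}(\overline{w}\,\overline{E}_A)$, and the twirl preserves both cones --- your justification for the channel cone is sound, since $[(U\otimes U)w(U^{\dagger}\otimes U^{\dagger})]^{T_A}=(U^{*}\otimes U)\,w^{T_A}\,(U^{*}\otimes U)^{\dagger}$ and channel-positivity is exactly positive semidefiniteness of $w^{T_A}$ --- so without loss of generality $w=\rho(\eta)$, with $\trn{}(\rho(\eta)\overline{E}_A)=\frac{1}{d}+\frac{d-1}{d}\eta$, affine and increasing in $\eta$. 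Inserting the ranges from Proposition \ref{prop1} then reproduces all four bounds, and your endpoint evaluations check out. What your approach buys: the reduction is lossless and rigorous, and tightness is automatic because the endpoints are themselves Werner operators lying inside the respective cones (e.g.\ $\eta=1$ is the homocorrelation image of the identity channel). What the paper's approach buys: brevity and operationally meaningful optimizers. Incidentally, your route sidesteps a small slip in the paper's proof: its stated lower-bound optimizer $w^{T_A}=(\identity-V^{T_A})/(d^2-d)$ is not positive semidefinite (it has eigenvalue $-1/d$ on $\ket{\Phi^+}$, i.e.\ it corresponds to the Werner operator with $\eta=-\frac{1}{d-1}$, outside the channel-positive range) and in fact yields $\trn{}(w\overline{E}_A)=0$ rather than $\frac{1}{d+1}$; the intended minimizer is the normalized complement $w^{T_A}=(\identity-\ketbra{\Phi^+})/(d^2-1)$, which is precisely your endpoint $\eta=-\frac{1}{d^2-1}$.
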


\vspace*{-4mm}
\begin{proof}
In the case of state-positive operators, the maximal value of $\trn{}({w\overline{E}_A})$ is achieved 
by setting $w=\overline{E}_A/\trn{}{\overline{E}_A}$, which results in $\trn{}({w\overline{E}_A})={2}/({d+1})$. For the lower bound, it is simple to see that choosing $w$ to lie in the complement of the projector yields a value of zero. Hence, we have obtained the bound of Eq. (\ref{eq:b1}).

In the case of channel-positive operators, the value of $\trn{}(w\overline{E}_A)$, where $w\geq_{\text{ch}}0$, is unchanged by a partial transposition of both operators. Thus, we may seek bounds on the value of $\trn{}({w^{T_A} \overline{E}_A^{T_A}})$, where $w^{T_A}$ is a density operator. By using Eq. (\ref{povm0}), 
the partial transposition of $\overline{E}_A$ is
$$\overline{E}_A^{T_A}=\frac{d}{d^+_2}\frac{\identity+ V^{T_A}}{2}.$$
Thus, the upper and lower bounds on $\trn{}({w\overline{E}_A})$ are achieved by setting $w^{T_A}= V^{T_A}/d$ and $w^{T_A}=(\identity- V^{T_A})/(d^2-d)$, respectively. Accordingly, the resulting bounds are $\frac{d}{d^+_2}\frac{1}{2}\leq \trn{}({w\overline{E}_A})\leq \frac{d}{d^+_2}\frac{1+d}{2}$, which simplify to those of Eq. (\ref{eq:b2}).
\end{proof}
By virtue of the homocorrelation map, the above result may be understood geometrically. The objects involved are the agreement/disagreement POVM operators $\overline{E}_A$ and $\overline{E}_D$, and the state- and channel- positive cones $W_{\text{st}}$ and $W_{\text{ch}}$, respectively. Theorem \ref{thm:bipartiteagree} places an {\em upper bound} on the inner product between vectors in $W_{\text{st}}$ and $\overline{E}_A$, and, similarly, on the inner product between vectors in $W_{\text{ch}}$ and $\overline{E}_D$. This geometric understanding is aided by the example of Werner operators shown in Fig. \ref{fig:statechannelcones}.

Lastly, we proceed to show that general joinability limitations (though not strict ones) can be derived based solely on i) the above agreement bounds of channels and states; ii) joinability bounds of classical probabilities; and iii) the fact that the agreement likelihoods must obey rules of classical joinability. Ultimately, the reduced states must satisfy certain limitations arising from joining limitations of classical probability distributions. In the three-party joining scenario, the bipartite marginal distributions of three \emph{classical} $d$-nary random variables must have probabilities of agreement $\alpha_{AB}$, $\alpha_{AC}$, and $\alpha_{BC}$ satisfying the following inequalities \cite{Johnson2013}:
\begin{eqnarray}
-\alpha_{AB}+\alpha_{AC}+\alpha_{BC}&\leq &1,\\ 
\;\;\; \alpha_{AB}-\alpha_{AC}+\alpha_{BC}&\leq &1,\\
\;\;\; \alpha_{AB}+\alpha_{AC}-\alpha_{BC}&\leq &1,
\label{eq:channelbound}
\end{eqnarray}
and, in the case of $d=2$, also 
\begin{equation}\label{eq:statebound}
\;\;\; \alpha_{AB}+\alpha_{AC}+\alpha_{BC}\geq 1.
\end{equation} 
Since $\trn{}({w\overline{E}_A})$ is a probability of agreement, it too is subject to the above constraints. 
Hence, we identify $\trn{}({\rho_i\overline{E}_A}) \equiv \alpha_{\ell}$, where $\ell =AB$, $AC$, or $BC$. Consider the case where systems $B$-$C$ are state-positive. Theorem \ref{thm:bipartiteagree} then sets the bound $\trn{}({\rho_{BC}\overline{E}_A})\leq\frac{2}{d+1}$. Setting the parameter $\alpha_{BC}=\trn{}({\rho_{BC}\overline{E}_A})$ to this upper limit of $\frac{2}{d+1}$, Eq. (\ref{eq:channelbound}) becomes
$$ \alpha_{AB}+\alpha_{AC}\leq \frac{d+3}{d+1}.
$$
In the case of $\alpha_{AB}=\alpha_{AC}\equiv \alpha$, this yields
$$ \alpha\leq \frac{d+3}{2(d+1)},
$$
which corresponds precisely to the optimal bound for qudit cloning \cite{Scarani2005} (cf. Eq. (21) therein, where their $F$ coincides with our $\alpha$).
We can similarly recover the exact bound for the 1-2 sharability of qubit Werner states determined in \cite{Johnson2013}. Again, we set the $B$-$C$ agreement to its extremal value $\trn{}({\rho_{BC}\overline{E}_A})=\frac{2}{d+1}$, as given by Theorem \ref{thm:bipartiteagree}. For $d=2$, Eq. (\ref{eq:statebound}) applies, and substituting in the extremal value of $\alpha_{BC}$ we obtain $ \alpha_{AB}+\alpha_{AC}\geq \frac{1}{3}$.
Again, in the case of $\alpha_{AB}=\alpha_{AC} \equiv \alpha$, this yields 
$ \alpha\leq \frac{1}{6}$, which is the exact condition for 1-2 sharability of Werner qubits.

While obtaining a full generalization of Theorem \ref{thm:bipartiteagree} to multiparty systems would entail 
a detailed understanding of representation theory for Brauer algebras which is beyond our current purpose, 
we can nevertheless establish the following: 
\begin{thm}
Let $w \in \mathcal{B}(\hilbert_d^{\otimes N})$, and 
consider a POVM with operation elements $\overline{E}_A=\frac{d
}{d^+_N}\identity^+_{N}$ and $\overline{E}_D=\identity-\overline{E}_A$ (analogous to Eq. (\ref{povm})). Then the degree of agreement for $w\geq_{\text{st}} 0$ as calculated by the likelihood of $\overline{E}_A$ is bounded by 
\begin{equation}\label{eq:b3}
0\leq \trn{}({w\overline{E}_A}) \leq \frac{d} {{d-1+N \choose N}}. 
\end{equation}
\end{thm}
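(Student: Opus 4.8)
The plan is to reduce the claim to an elementary spectral fact about the operator $\overline{E}_A$, exactly as in the bipartite case treated in Theorem~\ref{thm:bipartiteagree} (Eq.~(\ref{eq:b1})); the only ingredient that changes with $N$ is the dimension of the relevant symmetric subspace. First I would record the two structural facts that render $\overline{E}_A$ transparent: that $\identity^+_N$ is the orthogonal projector onto the totally symmetric subspace $\hilbert^+_N\subset\hilbert_d^{\otimes N}$, and that its rank equals $d^+_N={d-1+N\choose N}$, the number of degree-$N$ symmetric tensors over a $d$-dimensional space. Together with the form $\overline{E}_A=\frac{d}{d^+_N}\identity^+_N$ assumed in the statement, these show that $\overline{E}_A$ is merely the scalar $\frac{d}{d^+_N}$ times a projector, hence its spectrum is $\{0,\frac{d}{d^+_N}\}$.

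Given this, both bounds follow from the behaviour of $\trn{}(wP)$ for a trace-one state-positive $w$ and a projector $P$. For the upper bound I would write $\trn{}(w\overline{E}_A)=\frac{d}{d^+_N}\trn{}(w\identity^+_N)$ and use $0\leq\identity^+_N\leq\identity$ together with $w\geq_{\text{st}}0$ and $\trn{}(w)=1$ to conclude $\trn{}(w\identity^+_N)\leq 1$, so that $\trn{}(w\overline{E}_A)\leq\frac{d}{d^+_N}=\frac{d}{{d-1+N\choose N}}$; equality is attained by any state supported in $\hilbert^+_N$, e.g.\ $w=\identity^+_N/d^+_N$. For the lower bound, the trace of a product of two positive operators is nonnegative, giving $\trn{}(w\overline{E}_A)\geq0$, with equality for any state supported in the orthogonal complement of $\hilbert^+_N$ (which is nonvacuous whenever $d,N\geq 2$). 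This is precisely Eq.~(\ref{eq:b3}).

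If a self-contained derivation of the form of $\overline{E}_A$ is wanted rather than inheriting it from the statement, the single nontrivial step I would insert is the Haar-averaging identity $\int d\mu(\psi)\,\ketbra{\psi}^{\otimes N}=\identity^+_N/d^+_N$ \cite{Chiribella2006}; combining it with $\overline{E}_A=d\int d\mu(\psi)\,\ketbra{\psi}^{\otimes N}$ --- obtained by twirling $E_A=\sum_i\ketbra{i}^{\otimes N}$ and noting that the $d$ single-index contributions coincide --- reproduces $\overline{E}_A=\frac{d}{d^+_N}\identity^+_N$.

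I do not expect a genuine obstacle for the state-positive bound: once $\overline{E}_A$ is recognised as a rescaled projector, the two extremal values are just its largest and smallest eigenvalues paired against a density operator. The only care needed is the representation-theoretic bookkeeping of $d^+_N={d-1+N\choose N}$ and the prefactor $d$. This also explains why the theorem is stated only for $w\geq_{\text{st}}0$ and not for the channel-positive analogue of Eq.~(\ref{eq:b2}): there the lower bound would require resolving the partially-transposed projector $\overline{E}_A^{T_A}$ into its spectral components over the Brauer algebra for $N$ parties, which lies beyond the elementary symmetric-subspace computation used here.
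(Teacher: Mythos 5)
Your proposal is correct and follows essentially the same route as the paper's own proof: both reduce the claim to the observation that $\overline{E}_A$ is the scalar $d/d^+_N$ times the symmetric-subspace projector, with the upper bound saturated by $w=\overline{E}_A/\trn{}(\overline{E}_A)=\identity^+_N/d^+_N$ and the lower bound by any state supported on the orthogonal complement. Your version merely adds the (welcome) explicit verification that these are indeed extremal over all trace-one $w\geq_{\text{st}}0$, which the paper leaves implicit.
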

\begin{proof}
The maximal and minimal values of $\trn{}({w\overline{E}_A})$ are achieved by setting $w=\overline{E}_A/\trn{}{\overline{E}_A}$ and $w=(d\identity/d^+_N-\overline{E}_A)/\trn{}({d\identity/d^+_N) -\overline{E}_A}$, respectively, which yields the desired bounds of Eq. (\ref{eq:b3}).
\end{proof}
From the above multiparty bound, one may attempt to recover, for instance, the known bounds on 1-$n$ sharability of Werner states
\cite{Johnson2013}. However, we have, thus far, not been successful in this endeavor. In the tripartite qudit setting, such bounds were found to be sufficient, but this might be a special feature of this particular case. Therefore, it remains an open question to determine whether 
there exists a simple principle (or simple principles) which govern joinability limitations beyond the tripartite setting.

\section{Conclusion}
\label{sec:conclusion}

In this paper we have developed a unifying framework for the concept of quantum joinability. Many problems regarding the part-whole relationship in multiparty quantum settings, such as the quantum marginal problem, the asymmetric cloning problem, and various quantum extension problems, are  encapsulated by this framework. 
An important step was to introduce the \emph{homocorrelation map} as a natural way to represent quantum channels with bipartite operators, making them geometrically comparable to quantum states. Using this tool, it is possible 
to {\em directly} contrast the joinability properties of quantum states with those of quantum channels. 
In particular, applying the framework to the simplest case of $U\otimes U\otimes U$-invariant operators, we found that the state and channel joinability bounds work in tandem to exhibit the symmetry inherent in the limitations of classical joinability. In addition, we derived the local-positivity joinability bounds in this setting. Though less strict than state- or channel- joinability bounds, we found that the {\em local positivity joinability bounds are still more strict than purely classical ones}, and provided an operational interpretation of this fact. 

The Choi-Jamiolkowski ismorphism illuminates a duality between bipartite quantum states and quantum channels. As another main finding of this work, we have emphasized a crucial difference between the two, that manifests in the correlations that are obtainable from each. Namely, bipartite quantum states are limited in their agreement, whereas quantum channels are limited in their disagreement.  Again, this difference is made explicit by representing quantum channels with the homocorrelation map.
We showed how these differences, expressed in terms of agreement bounds, in turn 
inform the joinability properties of channels vs states.
In view of their general nature, these agreement bounds may have further implications yet to be discovered.

In closing, we note that throughout our analysis we have only considered scenarios with a pre-defined tensor product structure, and consequently all operator reductions are obtained via the usual partial-trace construction. However, it is important to appreciate that this was {\em not} a necessary restriction. Following \cite{Barnum2003}, one may also consider a more general notion of a reduced state, which results from appropriately restricting the global state to a distinguished operator subspace. 
Such a notion of reduction is operationally motivated in situations where a tensor product structure is not uniquely or naturally afforded on physical grounds (notably, systems of indistinguishable particles or operational quantum theory, see e.g.  \cite{Barnum2004}). This naturally points to a further extension of the present joinability framework ``beyond subsystems", 
which we plan to address in future investigation.

\section{Acknowledgements}

This work was inspired by discussions with Sandu Popescu, David Sicilia, Rob Spekkens, and Bill Wootters. Support from the Constance and Walter Burke Special Projects Fund in Quantum Information Science is gratefully acknowledged.

\appendix 
\section{Local positivity of Werner operators}

We present here a detailed proof of Thm. \ref{thm:locpos}. 
The first step is to show that considering tripartite joining state of a simpler form suffices in the qubit case.

An arbitrary tripartite Werner operator may be parametrized as 
\begin{eqnarray*}
\hspace*{-1cm}
w=a\identity+bV_{(AB)}+cV_{(AC)}+dV_{(BC)}
+e(V_{ABC}+V_{CBA})/2+if(V_{ABC}-V_{CBA})/2,
\end{eqnarray*}
where $a,b,c,d,e,f \in {\mathbb R}$ and normalization is left arbitrary for now. However, in the two-dimensional case, 
the six permutation representation operators are not independent, since $\identity-(V_{(AB)}+V_{(AC)}+V_{(AC)})+V_{(ABC)}+V_{(CBA)}=0$. Consequently, we may absorb the $V_{(ABC)}+V_{(CBA)}$ contribution into the first four terms, leaving us with
\begin{eqnarray*}
w=a\identity+bV_{(AB)}+cV_{(AC)}+dV_{(BC)}
+if(V_{ABC}-V_{CBA})/2.
\end{eqnarray*}
With $\ket{\psi_{\text{loc}}}\equiv\ket{\psi_1}\ket{\psi_2}\ket{\psi_3}$, local positivity of $w$ is guaranteed by $\bra{\psi_{\text{loc}}}w\ket{\psi_{\text{loc}}}\geq0$,  
 holding for all $\ket{\psi_1},\ket{\psi_2},\ket{\psi_3}\in\hilbert$. Writing
\begin{eqnarray*}
\bra{\psi_{\text{loc}}}w\ket{\psi_{\text{loc}}}&=a +b|\inprod{\psi_1}{\psi_2}|^2+c|\inprod{\psi_1}{\psi_3}|^2+d|\inprod{\psi_2}{\psi_3}|^2
\\&+f\text{Im}(\inprod{\psi_1}{\psi_2}\inprod{\psi_2}{\psi_3}\inprod{\psi_3}{\psi_1})\geq0,
\end{eqnarray*}
each choice of $\ket{\psi_1},\ket{\psi_2},\ket{\psi_3}$ enforces a linear inequality on $a,b,c,d,e,f$. However, 
certain $\ket{\psi_1},\ket{\psi_2},\ket{\psi_3}$ may result in an inequality whose satisfaction is guaranteed by a stricter inequality corresponding to a different set of product vectors. For each choice of $a,\ldots,f$, there will be an extremal (set of) product vector(s) $\ket{\psi'_{\text{loc}}}$ for which $\bra{\psi'_{\text{loc}}}w\ket{\psi'_{\text{loc}}}\geq0$ implies $\bra{\psi_{\text{loc}}}w\ket{\psi_{\text{loc}}}\geq0$ for all $\ket{\psi_{\text{loc}}}$. We seek to obtain such extremal product vectors, and write their inner products (e.g. $|\inprod{\psi_1}{\psi_2}|^2$, etc.) in terms of $a,\ldots,f$.

For Werner states, the local-positivity condition is invariant under a collective unitary transformation of $\ket{\psi_{\text{loc}}}$. Such a transformation corresponds to a rotation on the Bloch sphere. Thus, given $\ket{\psi_1}\ket{\psi_2}\ket{\psi_3}$, we may perform a collective unitary which takes this state to $\ket{\uparrow_z}\otimes(\cos{\theta}\ket{\uparrow_z}+\sin{\theta}\ket{\downarrow_z})\otimes(\cos{\Omega}\ket{\uparrow_z}+e^{i\phi}\sin{\Omega}\ket{\downarrow_z}).$ Without loss of generality, this will be our representative $\ket{\psi_{\text{loc}}}$.
This allows us to rewrite the expression of local-positivity as
%
\begin{eqnarray*}
\hspace*{-2cm} F &=& a +b\cos^2{\theta}+c\cos^2{\Omega}\\
\hspace*{-2cm} &+&d(\cos^2\theta\cos^2\Omega+\frac{1}{2}\cos\phi\sin2\theta\sin2\Omega+\sin^2\theta\sin^2\Omega) 
+\frac{f}{4}\sin{\phi}\sin{2\theta}\sin{2\Omega}\geq 0.
\end{eqnarray*}
Our goal is to determine the set of bipartite Werner operator trios that can be joined by a local-positive state $w$. These reduced states on $A$-$B$, $A$-$C$, and $B$-$C$ are each characterized by the single parameter $\alpha_{AB}=\tr(V_{(AB)}w)$, $\alpha_{AC}=\tr(V_{(AC)}w)$, and $\alpha_{BC}=\tr(V_{(BC)}w)$, respectively. In the next step, we show that if the local-positive state $w$ joins reduced Werner states with $\alpha_{AB}$, $\alpha_{AC}$, and $\alpha_{BC}$, then $w'=w|_{f=0}$ is local-positive and also joins them.

First, note that the bipartite reduced states $\tr_C(w)$, etc., do not depend on $f$; hence, if three bipartite states are local-positive-joinable by some $w$ with $f\neq0$, then $w'=w|_{f=0}$ will reduce to the same bipartite states as $w$. It remains to show that $w'$ is local-positive. Specifically, we want to show that if $F\geq 0$ for all $\theta, \Omega, \phi$, then $F(f=0)\geq 0$ for all $\theta, \Omega, \phi$. This follows from the fact that, independent of all else, the factor of $\sin\phi$ may determine the sign of its corresponding term; thus, for a given $a,\ldots,f$, the angles which minimize $F$ must be such that the term containing $f$ is non-positive. In this case, setting $f=0$ cannot decrease $F$.

We have thus shown that a sufficient joining state is of the form
\begin{eqnarray*}
w=a\identity+bV_{(AB)}+cV_{(AC)}+dV_{(BC)},
\end{eqnarray*}
and, in terms of the parameterization of the product state $\ket{\psi_{\text{loc}}}$, local positivity is ensured by requiring that
\begin{eqnarray*}
F= a +b\cos^2{\theta}+c\cos^2{\Omega}\\
+d (\cos^2\theta\cos^2\Omega+\frac{1}{2}\cos\phi\sin2\theta\sin2\Omega+\sin^2\theta\sin^2\Omega)
\geq 0, 
\end{eqnarray*}
for all $\theta,\Omega\in[0,\pi],\,\,\phi\in[0,2\pi]$. It remains to determine the extremal angles $\theta$, $\Omega$, and $\phi$, for a given $a,b$, $c,d$.
With respect to the $\phi$ dependence, $F$ is extremized by setting $\cos\phi=\pm1$, which determines the sign of the corresponding term. However, the sign of this term is also determined by the sign of $\theta$ or $\Omega$, which does not alter the remainder of the expression for $F$. Thus, we absorb this choice of $\cos\phi=\pm1$ into the sign of $\theta$, say. This allows us to further simplify our expression to
$$F=a +b\cos^2{\theta}+c\cos^2{\Omega}+d\cos^2{(\theta-\Omega)}.
$$ 
The interpretation of this simplification is that it suffices to consider states $\ket{\psi_1},\ket{\psi_2},\ket{\psi_3}$ all lying in an equatorial plane of the Bloch sphere.
%
We make a final simplification by enforcing the normalization $\tr(w)=1$. This removes $a$ 
by $a=\frac{1}{8}-\frac{1}{2}(b+c+d)$, giving
\begin{equation}\label{eq:angleparam}
F=\frac{1}{2}\Big(\frac{1}{4}+b\cos{\theta}+c\cos{\Omega}+d\cos{(\theta+\Omega)}\Big),
\end{equation}
where we have replaced $2\theta\rightarrow\theta$ and $-2\Omega\rightarrow\Omega$ without loss of generality.

Now in order to find the desired extremal inequalities, we take partial derivatives with respect to the remaining 
two angles, namely: 
\begin{eqnarray*}
\hspace*{-5mm}\frac{\partial F}{\partial\theta}=-b\sin{(\theta)}-d\sin{(\theta+\Omega)}=0, \quad
\frac{\partial F}{\partial\Omega}=-c\sin{(\Omega)}-d\sin{(\theta+\Omega)}=0.
\end{eqnarray*}
Assuming $b,c,d,\neq0$, the zeros of the gradient of $F$ are given by either
$$ \sin\theta=\sin\Omega=\sin(\theta+\Omega)=0, 
$$ 
or
$$ 
 \frac{b}{d}=-\frac{\sin(\theta+\Omega)}{\sin\theta},\qquad\frac{c}{d}=-\frac{\sin(\theta+\Omega)}{\sin\Omega}.
$$ 
The first set of solutions correspond to $\theta=n\pi$ and $\Omega=m\pi$. There are four inequalities derived from these
\begin{eqnarray}
\label{eq:linearlocposbounds1}
\frac{1}{4}+b+c+d\geq0,\quad  
\frac{1}{4}+b-c-d\geq0,\\
\frac{1}{4}-b+c-d\geq0,\quad
\frac{1}{4}-b-c+d\geq0.
\label{eq:linearlocposbounds2}
\end{eqnarray}
Satisfaction of these is certainly necessary for $w$ to be locally positive, but it is not sufficient.

Although they do not minimize $F$ for all $b,c,d$, the solutions $\sin\theta=\sin\Omega=\sin(\theta+\Omega)=0$ allow us to obtain four equalities
\begin{eqnarray*}
\cos{x}=\frac{\cos{x}\sin{y}}{\sin{y}}=\frac{\sin{(x+y)}-\sin{(x-y)}}{2\sin{y}},\\
\sin{(x+y)}\sin{(x-y)}=\sin^2{x}-\sin^2{y}.
\end{eqnarray*}
Putting these together we have
\begin{eqnarray*}
\cos{x}=\frac{1}{2}\left[\frac{\sin{(x+y)}}{\sin{y}}+\frac{\sin{y}}{\sin{(x+y)}}
-\frac{\sin^2{x}}{\sin{y}\sin{(x+y)}}\right].
\end{eqnarray*}
Thus, we can write each of the $\cos$ terms in terms of $b,c,d$
\begin{eqnarray*}
\cos{\theta}=\frac{1}{2}\left[\frac{c}{d}+\frac{d}{c}-\frac{cd}{b^2}\right],\\
\cos{\Omega}=\frac{1}{2}\left[\frac{b}{d}+\frac{d}{b}-\frac{bd}{c^2}\right],\\
\cos{(\theta+\Omega)=-\frac{1}{2}\left[\frac{b}{c}+\frac{c}{b}-\frac{bc}{d^2}\right]}.
\end{eqnarray*}
Substituting these into Eq. (\ref{eq:angleparam}), we obtain 
\begin{equation}\label{eq:last}
\frac{1}{2}-\frac{(bc)^2+(bd)^2+(cd)^2}{bcd}\geq 0
\end{equation}
as the remaining necessary condition for local positivity. The above condition, along with Eqs. (\ref{eq:linearlocposbounds1})-(\ref{eq:linearlocposbounds2}) ensure the local positivity of the relevant states.
As a final step, note that the Werner parameters of Eq. (\ref{eq:Werner}) are related to $b,c,d$ via 
$$b=\frac{1}{4}\eta_{AB}, \quad c=\frac{1}{4}\eta_{AC}, \quad 
d=\frac{1}{4}\eta_{BC}.$$ 
Upon re-expressing Eqs. (\ref{eq:linearlocposbounds1}), (\ref{eq:linearlocposbounds2}), and (\ref{eq:last}) 
in terms of Werner parameters $\eta$s, the result quoted in Thm. \ref{thm:locpos} is established. \qed

\section*{References}


\begin{thebibliography}{10}
\expandafter\ifx\csname url\endcsname\relax
  \def\url#1{{\tt #1}}\fi
\expandafter\ifx\csname urlprefix\endcsname\relax\def\urlprefix{URL }\fi
\providecommand{\eprint}[2][]{\url{#2}}

\bibitem{Schroedinger1935}
Schr\"{o}dinger E 1935 {\em Naturwissenschaften\/} {\bf 23} 807

\bibitem{vonNeumann1955}
von Neumann J 1955 {\em Mathematical Foundations of Quantum Mechanics\/} Vol. {\bf 2} 
  (Princeton University Press)

\bibitem{Accardi1990}
Accardi L 1990 Quantum probability and the foundations of quantum theory, in: {\em
  Statistics in Science}, {\em Boston Studies in the Philosophy of Science\/}
  Vol. {\bf 122}, Cooke R and Costantini D (Eds.) (Springer-Verlag) p. 119

\bibitem{Leifer2006}
Leifer M~S 2006 {\em Phys. Rev. A\/} {\bf 74} 042310

\bibitem{Barnum2012}
Barnum H and Wilce A 2012 Post-classical probability theory, in: {\em Quantum
  Theory: Informational Foundations and Foils\/}, Chiribella G and Spekkens
  R~W (Eds.)  (Springer-Verlag)

\bibitem{Johnson2013}
Johnson P~D and Viola L 2013 {\em Phys. Rev. A\/} {\bf 88} 032323

\bibitem{Klyachko2006}
Klyachko A~A 2006 {\em J. Phys.: Conf. Series\/} {\bf 36} 72

\bibitem{Liu2006}
Liu Y~K 2006 Consistency of Local Density Matrices Is QMA-Complete, in:  
{\em Approximation, Randomization, and Combinatorial Optimization. Algorithms and Techniques}, 
Lect. Notes Comp. Science Vol. {\bf 4110} (Springer-Verlag), 
Diaz {\em et al.} (Eds.)

\bibitem{Wootters2000}
Coffman V, Kundu J and Wootters W~K 2000 {\em Phys. Rev. A\/} {\bf 61} 5

\bibitem{Terhal2004}
Terhal B 2004 {\em IBM J. Res. Devel.\/} {\bf 48} 71

\bibitem{Fritz2013}
Fritz T and Chaves R 2013 {\em IEEE Trans. Inf. Theory\/} {\bf 59} 803

\bibitem{Lieb2013}
Carlen E~A, Lebowitz J~L and Lieb E~H 2013 
{\em J. Math. Phys.} {\bf 54} 062103
  
\bibitem{Chen2013}
Chen J, Ji Z, Kribs D, L\"{u}tkenhaus N and Zeng B 2013 Symmetric extension of two-qubit states
  E-print arXiv:1310.3530

\bibitem{Griffiths2005}
Griffiths R~B 2005 {\em Phys. Rev. A\/} {\bf 71} 042337

\bibitem{Chen2012}
Chen J, Ji Z, Klyachko A, Kribs D~W and Zeng B 2012 {\em J.
  Math. Phys.\/} {\bf 53} 022202

\bibitem{Cerf2000}
Cerf N~J 2000 {\em J. Mod. Opt.\/} {\bf 47} 187

\bibitem{Iblisdir2005}
Iblisdir S, Ac\'{i}n A, Cerf N~J, Filip R, Fiur\'a\ifmmode~\check{s}\else
  \v{s}\fi{}ek J and Gisin N 2005 {\em Phys. Rev. A\/} {\bf 72} 042328

\bibitem{Ghiu2003}
Ghiu I 2003 {\em Phys. Rev. A\/} {\bf 67} 012323

\bibitem{Jamiolkowski1972}
Jamio{\l}kowski A 1972 {\em Rep. Math. Phys.\/} {\bf 4} 275

\bibitem{Choi1975}
Choi M~D 1975 {\em Lin. Alg. Appl.\/} {\bf 10} 285

\bibitem{Ramanathan2009}
Kay A, Kaszlikowski D and Ramanathan R 2009 {\em Phys. Rev. Lett.\/} {\bf
  103} 050501

\bibitem{Horodecki2012}
\'{C}wikli\'{n}ski P, Horodecki M and Studzi\'{n}ski M 2012 {\em Phys.
  Lett. A\/} {\bf 376} 2178

\bibitem{Wootters1982}
Wootters W and Zurek W 1982 {\em Nature\/} {\bf 299} 802

\bibitem{Vedral2013}
Fitzsimons J, Jones J and Vedral V 2013 Quantum correlations which imply
  causation E-print arXiv:1302.2731v1

\bibitem{Jiang2013}
Jiang M, Luo S and Fu S 2013 {\em Phys. Rev. A\/} {\bf 87} 022310

\bibitem{Biamonte2011}
Bergholm V and Biamonte J~D 2011 {\em J. Phys. A: Math. 
  Theor.\/} {\bf 44} 245304

\bibitem{Nielsen2001}
Nielsen M~A and Chuang I~L 2001 {\em Quantum Computation and Quantum
  Information\/} (Cambridge University Press)

\bibitem{Chiribella2006}
Chiribella G and D'Ariano G~M 2006 {\em Phys. Rev. Lett.\/} {\bf 97} 250503

\bibitem{Peres1996}
Peres A 1996 {\em Phys. Rev. Lett.\/} {\bf 77} 1413

\bibitem{Jamiolkowski1974}
Jamio{\l}kowski A 1974 {\em Rep. Math. Phys.\/} {\bf 5} 415

\bibitem{Bengtsson2006}
Bengtsson I and {\.Z}yczkowski K 2006 {\em Geometry of Quantum States: An
  Introduction to Quantum Entanglement\/} (Cambridge University Press)

\bibitem{Bhatia2009}
Bhatia R 2009 {\em Positive Definite Matrices\/} (Princeton University Press)

\bibitem{Terhal2000}
Terhal B~M 2000 {\em Phys. Lett. A\/} {\bf 271} 319

\bibitem{Jordan1934}
Jordan P, von Neumann J and Wigner E 1934 {\em Ann. Math.\/} {\bf 35}
  29

\bibitem{Kay2012}
Kay A, Ramanathan R and Kaszlikowski D 2013
  {\em Quantum Inf. Comput.} {\bf 13}, 880
  
  \bibitem{Werner1998}
Werner R 1998 {\em Phys. Rev. A\/} {\bf 58} 1827

\bibitem{Zanardi1998}
Zanardi P 1998 {\em Phys. Rev. A\/} {\bf 58} 3484

\bibitem{Horodecki2013Sep}
Pankowski L, Brandao F~G~S~L, Horodecki M and Smith G 2013 {\em Quantum
  Inf. Comput.\/} {\bf 13} 2178

\bibitem{Weyl1997}
Weyl H 1997 {\em The Classical Groups: Their Invariants and Representations\/}
  (Princeton University Press)

\bibitem{Werner1989}
Werner R~F 1989 {\em Phys. Rev. A\/} {\bf 40} 4277

\bibitem{King2003}
King C 2003 {\em IEEE Trans. Inf. Theory\/} {\bf 49} 221

\bibitem{Brauer1937}
Brauer R 1937 {\em Ann. Math.\/} {\bf 38} 857

\bibitem{Nazarov1996}
Nazarov M 1996 {\em J. Alg.\/} {\bf 182} 664

\bibitem{Werner2002}
Audenaert K, De~Moor B, Vollbrecht K~G~H and Werner R~F 2002 {\em Phys. Rev.
  A\/} {\bf 66} 032310

\bibitem{Keyl2002}
Keyl M 2002 {\em Phys. Rep.\/} {\bf 369} 431

\bibitem{Horodecki2013May}
Studziński M, Horodecki M and Mozrzymas M 2013 {\em J. Phys. A:
  Math. Theor.\/} {\bf 46} 395303

\bibitem{Horodecki2013Aug}
Mozrzymas M, Horodecki M and Studzi\'{n}ski M 2013 Structure and properties of
  the algebra of partially transposed operators E-print arXiv:1308.2653

\bibitem{Horodecki1999}
Horodecki M, Horodecki P and Horodecki R 1999 {\em Phys. Rev. A\/} {\bf 60}
  1888

\bibitem{Woronowicz1976}
Woronowicz S 1976 {\em Rep. Math. Phys.\/} {\bf 10} 165

\bibitem{Bengtsson2011}
Bengtsson I,  Weis S and {\.Z}yczkowski K 2013, Geometry of the set of mixed quantum 
states: an apophatic approach, in: {\em Geometric Methods in Physics Trends in Mathematics}  p. 175

\bibitem{Henrion2011}
Henrion D 2011 {\em Acta Applic. Math.\/} {\bf 115} 319

\bibitem{Eggeling2001}
Eggeling T and Werner R~F 2001 {\em Phys. Rev. A\/} {\bf 63} 042111

\bibitem{Scarani2005}
Scarani V, Iblisdir S, Gisin N and Ac\'in A 2005 {\em Rev. Mod. Phys.\/} {\bf
  77} 1225

\bibitem{Barnum2003}
Barnum H, Knill E, Ortiz G and Viola L 2003 {\em Phys. Rev. A\/} {\bf 68}(3)
  032308

\bibitem{Barnum2004}
Barnum H, Knill E, Ortiz G, Somma R and Viola L 2004 {\em Phys. Rev. Lett.\/}
  {\bf 92} 107902

\bibitem{ViolaGE}
Viola L and Barnum H 2010 Entanglement as an observer-dependent
notion: Entanglement and subsystems, entanglement beyond subsystems,
and all that, in: {\em Philosophy of Quantum Information and 
Entanglement}, Bokulich A and Jaeger G (Eds.)   
(Cambridge University Press, Cambridge), p. 16

\end{thebibliography}


\providecommand{\newblock}{}

\end{document}